\documentclass[letterpaper, 10 pt]{ieeeconf}  
 
\IEEEoverridecommandlockouts 
\usepackage{geometry}
\newgeometry{top=0.75in, bottom=0.75in, right=0.75in, left=0.75in}

\newcommand{\zono}[1]{\langle #1 \rangle}
\usepackage{packagearticleieee}
\usepackage{amssymb}

\newtheorem{theorem}{Theorem}
\newtheorem{proposition}{Proposition}

\usepackage{cite}
\usepackage{booktabs}

 \usepackage{algorithm} 
 \usepackage{algpseudocode}
 \usepackage[utf8]{inputenc}
\usepackage{graphicx}
\usepackage{subcaption}
\usepackage{enumitem}

\makeatletter
\let\NAT@parse\undefined
\makeatother
\usepackage{url}
\usepackage[colorlinks=true,linkcolor=blue!80!black,citecolor=blue!80!black,urlcolor=blue!80!black]{hyperref}
\DeclareMathSymbol{\shortminus}{\mathbin}{AMSa}{"39}

\def\tr{\text{tr}}

\title{\LARGE\bf Transformer-Accelerated Interpolated Data-Driven \\Reachability Analysis from Noisy Data }

\author{Zhen Zhang, Ahmad Hafez, Peng Xie, Yanliang Huang, Wenyuan Wu, and Amr Alanwar
\thanks{All authors are with the TUM School of Computation, Information, and Technology (CIT), Technical University of Munich (TUM), Germany. (Email: $\{$zhenzhang.zhang, a.hafez, p.xie, yanliang.huang,  wenyuan.wu, alanwar$\}$@tum.de)}
}

\begin{document}
\maketitle
\thispagestyle{empty}
\pagestyle{empty}

\begin{abstract}
Data-driven reachability analysis provides guaranteed outer approximations of reachable sets from input-state measurements, yet each propagation step requires a matrix-zonotope multiplication whose cost grows with the horizon length, limiting scalability. We observe that data-driven propagation is inherently step-size sensitive, in the sense that set-valued operators at different discretization resolutions yield non-equivalent reachable sets at the same physical time, a property absent in model-based propagation. Exploiting this multi-resolution structure, we propose Interpolated Reachability Analysis (IRA), which computes a sparse chain of coarse anchor sets sequentially and reconstructs fine-resolution intermediate sets in parallel across coarse intervals. We derive a fully data-driven coarse-noise over-approximation that removes the need for continuous-time system knowledge, prove deterministic outer-approximation guarantees for all interpolated sets, and establish conditional tightness relative to the fine-resolution chain. To replace the remaining matrix-zonotope multiplications in the fine phase, we further develop Transformer-Accelerated IRA (TA-IRA), where an encoder-decoder Transformer is calibrated via split conformal prediction to provide finite-sample pointwise and path-wise coverage certificates. Numerical experiments on a five-dimensional linear system confirm the theoretical guarantees and demonstrate significant computational savings.
\end{abstract}

\section{Introduction}

Reachability analysis computes the set of all states a dynamical system can reach from a given initial set under admissible inputs and bounded disturbances~\cite{althoff2010reachability}. When the system matrices are unknown, data-driven methods construct guaranteed over-approximations directly from measured input-state trajectories~\cite{alanwar2021data,alanwar2023data}. While stochastic formulations yield probabilistic reachability certificates~\cite{thorpe2021sreachtools,griffioen2023data,hashemi2024statistical}, they do not provide worst-case set-containment guarantees. We therefore focus on the deterministic setting, where bounded disturbances are assumed and formal outer approximations are sought for all admissible uncertainty realizations~\cite{park2024data,wang2023data,djeumou2022fly,dietrich2024nonconvex}.

Within this setting, zonotope-based representations are widely used due to their closure under linear maps, Minkowski sums, and Cartesian products~\cite{girard2005reachability}. The method of~\cite{alanwar2023data} encodes all data-consistent system matrices as a matrix zonotope (MZ) and propagates reachable sets via MZ-zonotope multiplication. Richer representations, such as constrained matrix zonotopes~\cite{alanwar2023data} and constrained polynomial matrix zonotopes~\cite{kochdumper2023constrained,zhang2025data}, reduce conservatism but do not eliminate the fundamental computational bottleneck: each multiplication step increases the generator count and necessitates order reduction~\cite{girard2005reachability}, whose cumulative cost grows with the prediction horizon. Classical acceleration strategies, including temporal decomposition~\cite{girard2006efficient,althoff2021set} and GPU parallelization~\cite{bak2017hylaa}, have received limited attention in the data-driven setting because MZ-zonotope propagation is inherently sequential. Transformer architectures~\cite{vaswani2017attention}, which process sequences through a fixed-depth network rather than step-by-step recurrence, offer a promising alternative for approximating iterative set-valued computations, while conformal prediction provides distribution-free finite-sample coverage guarantees for the resulting predictions~\cite{vovk2005algorithmic,shafer2008tutorial}.

A key observation underlying this work is that data-driven reachable-set propagation has a temporal structure that differs fundamentally from its model-based counterpart. In model-based reachability, propagations performed with different discretization step sizes are reconciled through the semigroup property of the underlying dynamics. In contrast, data-driven propagation is carried out via set-valued model operators whose uncertainty is re-instantiated at each step; as a result, different step sizes generally yield distinct reachable sets over the same physical time horizon. Rather than viewing this step-size sensitivity merely as a source of conservatism, we treat it as a source of algorithmic structure.

Building on this observation, we propose Interpolated Reachability Analysis (IRA), a multi-resolution framework for data-driven reachability. IRA first computes reachable sets at a coarse temporal resolution to obtain certified anchor sets, and then reconstructs intermediate fine-resolution reachable sets locally between successive anchors. Because these local interpolation problems are independent across coarse intervals, the fine-resolution phase can be executed fully in parallel. To further reduce computation, we introduce Transformer-Accelerated IRA (TA-IRA), which replaces explicit fine-step interpolation with an encoder-decoder Transformer operating on zonotopic set representations. Since this learned surrogate no longer yields deterministic set containment by construction, we calibrate its predictions using split conformal prediction, thereby obtaining finite-sample pointwise and path-wise coverage guarantees with explicit statistical semantics.

The contributions of this paper are threefold.
First, we formalize the step-size sensitivity of data-driven reachability and show how it gives rise to a new multi-resolution viewpoint that is specific to the data-driven setting.
Second, we introduce IRA, a parallelizable reachability framework that combines coarse anchor propagation with local fine-resolution interpolation, together with a fully data-driven procedure for coarse-step disturbance over-approximation; under the stated assumptions, the resulting interpolated reachable sets are guaranteed to outer-approximate the true reachable sets, and we further characterize a conditional tightness property.
Third, we develop TA-IRA, in which a Transformer with zonotope-aware tokenization autoregressively predicts intermediate reachable sets, and we couple this predictor with conformal calibration to obtain finite-sample pointwise coverage at each substep and path-wise joint coverage over each coarse interval, while further reducing wall-clock time.

The paper is organized as follows. Section~\ref{sec:preliminaries} introduces preliminaries and the problem formulation. Section~\ref{sec:sensitivity} analyzes step-size sensitivity. Section~\ref{sec:ira} presents IRA and its theoretical properties. Section~\ref{sec:ta-ira} develops TA-IRA. Section~\ref{sec:numerical-simulations} provides numerical experiments. Section~\ref{sec:conclusion} concludes.

\section{Preliminaries and Problem Formulation}\label{sec:preliminaries}

\subsection{Notation}
Let $\R$ and $\N$ denote the sets of real and natural numbers, respectively. We write $0_{m \times n}$, $1_{m \times n}$, and $I_n$ for the zero, all-ones, and identity matrices, respectively. For a matrix $A$, $A^\top$ is the transpose, $A^\dagger$ the pseudoinverse, $A_{(i,j)}$ the $(i,j)$-th entry, and $A_{(\cdot,j)}$ the $j$-th column. The operator $\mathrm{blkdiag}(\cdot)$ denotes the block-diagonal matrix formed from its arguments.

\subsection{Set Representations}

\begin{definition}[Zonotope {\cite{kuhn1998rigorously}}] \label{def:zonotopes}
Given a center $c_{\mathcal{Z}} \in \mathbb{R}^{n_x}$ and $\gamma_{\mathcal{Z}} \in \mathbb{N}$ generators collected in
$G_{\mathcal{Z}} = \begin{bmatrix} g_{\mathcal{Z}}^{(1)} & \dots & g_{\mathcal{Z}}^{(\gamma_{\mathcal{Z}})} \end{bmatrix}
\in \mathbb{R}^{n_x \times \gamma_{\mathcal{Z}}}$, a zonotope is defined as
\begin{equation}
\mathcal{Z} = \Bigl\{ x \in \mathbb{R}^{n_x} \;\Big|\;
x \!=\! c_{\mathcal{Z}} + \sum_{i=1}^{\gamma_{\mathcal{Z}}} \alpha^{(i)} g_{\mathcal{Z}}^{(i)},
 \alpha^{(i)} \!\in\![\shortminus 1,\!1] \Bigr\}.
\end{equation}
We write $\mathcal{Z} = \zono{c_{\mathcal{Z}}, G_{\mathcal{Z}}}$.
\end{definition}

Zonotopes are closed under linear maps, Minkowski sums, and Cartesian products~\cite{althoff2010reachability}. Specifically, for $\mathcal{Z}_1=\zono{c_{\mathcal{Z}_1},G_{\mathcal{Z}_1}}$, $\mathcal{Z}_2=\zono{c_{\mathcal{Z}_2},G_{\mathcal{Z}_2}}$, and $L \in \mathbb{R}^{m \times n_x}$, the linear map is given by $L\mathcal{Z}=\zono{Lc_{\mathcal{Z}},LG_{\mathcal{Z}}}$, the Minkowski sum by $\mathcal{Z}_1 \oplus \mathcal{Z}_2=\zono{c_{\mathcal{Z}_1}+c_{\mathcal{Z}_2},[G_{\mathcal{Z}_1},G_{\mathcal{Z}_2}]}$, the Minkowski difference by $\mathcal{Z}_1 \ominus \mathcal{Z}_2 = \{z_1 \in \mathbb{R}^n \mid z_1 \oplus \mathcal{Z}_2 \subseteq \mathcal{Z}_1\}$, and the Cartesian product by $\mathcal{Z}_1 \times \mathcal{Z}_2 = \zono{[c_{\mathcal{Z}_1}^\top, c_{\mathcal{Z}_2}^\top]^\top, \mathrm{blkdiag}(G_{\mathcal{Z}_1}, G_{\mathcal{Z}_2})}$. For simplicity, we use the notation $+$ instead of $\oplus$ to denote the Minkowski sum as the type can be determined from the context. Similarly, we use $\mathcal{Z}_1 - \mathcal{Z}_2$ to denote $\mathcal{Z}_1 + (-1)\mathcal{Z}_2$, not the Minkowski difference.

\begin{definition}[Matrix Zonotope {\cite[p.~52]{althoff2010reachability}}] \label{def:matzonotopes}
Given a center matrix $C_{\mathcal{M}} \in \mathbb{R}^{n_x \times p}$ and $\gamma_{\mathcal{M}} \in \mathbb{N}$ generator matrices $\tilde{G}_{\mathcal{M}}=\begin{bmatrix} G_{\mathcal{M}}^{(1)}&\dots&G_{\mathcal{M}}^{(\gamma_{\mathcal{M}})} \end{bmatrix} \in \mathbb{R}^{n_x \times (p \gamma_{\mathcal{M}})}$, a matrix zonotope is defined as
\begin{equation}
	\mathcal{M} {=} \Big\{\! X \in \mathbb{R}^{n_x \times p} \Big| X {=} C_{\mathcal{M}}\! + \!\sum_{i=1}^{\gamma_{\mathcal{M}}} \alpha^{(i)}  G_{\mathcal{M}}^{(i)}  ,\!
	\alpha^{(i)} \!\in\![\shortminus 1,\!1]\!\Big\} .
\end{equation}
We use the shorthand notation $\mathcal{M} = \zono{C_{\mathcal{M}},\tilde{G}_{\mathcal{M}}}$ for a matrix zonotope. 
\end{definition}

\begin{definition}[Self-concatenation~\cite{alanwar2023data}]\label{def:self-concat}
Let $\mathcal{Z}=\zono{c_{\mathcal{Z}},G_{\mathcal{Z}}}\subset\mathbb{R}^{n}$ be a zonotope with $\gamma_{\mathcal{Z}}$ generators. For $T\in\mathbb{N}$, its self-concatenation is $\mathcal{M}_{\mathcal{Z}}^{(T)} = \{ [z_1 \cdots z_T] \in \mathbb{R}^{n\times T} \mid z_j\in\mathcal{Z}\}$, which admits the MZ representation with center $[c_{\mathcal{Z}} \cdots c_{\mathcal{Z}}]$ and $\gamma_{\mathcal{Z}} T$ generators $G_{\mathcal{M}}^{(i,j)} = [0_{n\times (j-1)},\, g_{\mathcal{Z}}^{(i)},\, 0_{n\times (T-j)}]$.
\end{definition}

\subsection{Assumptions}

We assume access to a single input-state trajectory of length $T+1$. Define the data matrices $X_+ = [x_{(1)}\,\cdots\,x_{(T)}]$, $X_- = [x_{(0)}\,\cdots\,x_{(T-1)}]$, $U_- = [u_{(0)}\,\cdots\,u_{(T-1)}]$\label{eq:offline-data}, $D_- = [X_-^\top\; U_-^\top]^\top$, and $D = [X_+^\top\; X_-^\top\; U_-^\top]^\top$.

\begin{assumption}
\label{ass:zon-noise}
For all $k\in\mathbb{Z}_{\geq 0}$, the disturbance $w_{(k)}$ is bounded by a known zonotope $\mathcal{Z}_w$ containing the origin.
\hfill $\lrcorner$
\end{assumption}


\begin{assumption} \label{ass:rank_D}
For any data matrix $D_-$, we assume that $D_-$ has full row rank, i.e., $\operatorname{rank}(D_-)=n_x+n_u$.
\end{assumption}

By Assumption~\ref{ass:zon-noise} and Definition~\ref{def:self-concat}, the stacked disturbance $W_- = [w_{(0)}\cdots w_{(T-1)}]$ satisfies $W_- \in \mathcal{M}_w = \zono{C_{\mathcal{M}_w},\tilde{G}_{\mathcal{M}_w}}$, where $\mathcal{M}_w$ is the self-concatenation of $\mathcal{Z}_w$ over $T$ columns.

We define the set of all system matrices consistent with the data as $\mathcal{N}_\Sigma \triangleq \bigl\{ [A\;B] \bigm| X_+ = AX_- + BU_- + W_-,\, W_- \in \mathcal{M}_w \bigr\}$\label{eq:Nsig}. By construction, $[A_\tr\;B_\tr]\in\mathcal{N}_{\Sigma}$.

\subsection{Problem Formulation}

We consider the discrete-time linear time-invariant system
\begin{equation}
\label{eq:model-linear}
    x_{(k)} = A_\tr x_{(k-1)} + B_\tr u_{(k-1)} + w_{(k)},
\end{equation}
where $x_{(k)} \in \mathbb{R}^{n_x}$ is the state, $u_{(k)} \in \mathbb{R}^{n_u}$ is the input, and $w_{(k)} \in \mathbb{R}^{n_x}$ is an unknown disturbance. The matrices $A_\tr$ and $B_\tr$ are unknown. Let $\mathcal{X}_0 \subset \mathbb{R}^{n_x}$ be a compact set of initial states, $\mathcal{U}_k \subset \mathbb{R}^{n_u}$ a compact set of admissible inputs, and $\mathcal{Z}_w \subset \mathbb{R}^{n_x}$ a compact disturbance set.

\begin{definition}[Exact Reachable Set]
The exact reachable set at time $N \in \mathbb{N}$ is
\begin{align}
\label{eq:R}
\mathcal{R}_{N} = &\big\{ x_{(N)} \in \mathbb{R}^{n_x} \big|
x_{(k+1)} = A_\tr x_{(k)} + B_\tr u_{(k)} + w_{(k)}, \nonumber \\&
x_{(0)} \in \mathcal{X}_0,
u_{(k)} \in \mathcal{U}_k,
w_{(k)} \in \mathcal{Z}_w,
k = 0,\dots,N-1
\big\}.
\end{align}
\end{definition}

\begin{problem}\label{prob:main}
Given input-state data generated by \eqref{eq:model-linear}, a compact initial set $\mathcal{X}_0$, admissible input sets $\mathcal{U}_k$, and a compact disturbance set $\mathcal{Z}_w$, compute guaranteed outer approximations $\hat{\mathcal{R}}_k$ of the exact reachable sets $\mathcal{R}_k$ over a prescribed finite horizon such that $\mathcal{R}_k \subseteq \hat{\mathcal{R}}_k$ for all $k$, while reducing the computational cost of data-driven reachability analysis through interpolation-based constructions.
\end{problem}

\section{Step-Size Sensitivity of Data-Driven Reachability}\label{sec:sensitivity}

Model-based reachable sets evaluated at a given physical time are invariant to the discretization step size, since the exact system matrices compose consistently across resolutions. In the data-driven setting, this invariance breaks down: the over-approximating model sets depend on the step size, causing the reachable sets at the same physical time to differ across discretizations. We formalize this distinction.

\subsection{Model-Based Step-Size Invariance}

Consider the continuous-time system 
\begin{align} \label{consys}
\dot{x} = A_c x + B_c u + w
\end{align}
discretized at a fine step size $\Delta_f$ and a coarse step size $\Delta_c = N_s \Delta_f$, $N_s \in \N$. The discrete-time matrices are $(A_f, B_f)$ and $(A_c, B_c)$, respectively, with noise~\cite{althoff2010reachability}
\begin{equation}\label{eq:gamma_disc}
\mathcal{Z}_w^{\Delta} = B_w(\Delta) \sigma_w, \quad B_w(\Delta) = A_c^{-1}(e^{A_c \Delta} - I),
\end{equation}
where $\sigma_w$ is the continuous-time noise intensity.

Model-based propagation uses exact matrices satisfying $A_c = A_f^{N_s}$, so the reachable set at any physical time $t = k\Delta_c$ is independent of the step size. The same input zonotope $\mathcal{U}^c = \mathcal{U}^f = \mathcal{U}$ is used at both resolutions. As confirmed in Fig.~\ref{fig:mb_comparison}, the model-based coarse sets with $\mathcal{Z}_w^c = \bigoplus_{i=0}^{N_s-1}A_f^i\mathcal{Z}_w^f$ coincide with fine sets at every shared time; any discrepancy in the data-driven case arises from the model sets $\hat{\mathcal{M}}_\Sigma^c$, $\hat{\mathcal{M}}_\Sigma^f$.

\subsection{Data-Driven Step-Size Dependence}

\begin{lemma}[\cite{alanwar2023data}]
\label{lm:sigmaM}
Consider the offline input-state data \eqref{eq:offline-data} such that Assumption~\ref{ass:rank_D} holds and consider the MZ
\begin{align}
    \mathcal{M}^\Sigma = (X_+ - \mathcal{M}_w) \begin{bmatrix} 
    X_- \\ U_- 
    \end{bmatrix}^{\dagger}.
    \label{eq:zonoAB}
\end{align} 
Then, $\begin{bmatrix} A_\tr & B_\tr \end{bmatrix} \in \mathcal{M}^\Sigma$ and $\mathcal{N}_\Sigma \subseteq \mathcal{M}^\Sigma$. 
\hfill $\lrcorner$
\end{lemma}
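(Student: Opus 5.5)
The plan is to show that every data-consistent pair $[A\;B]\in\mathcal{N}_\Sigma$ lies in $\mathcal{M}^\Sigma$. The membership $[A_\tr\;B_\tr]\in\mathcal{M}^\Sigma$ then follows at once from the already-noted fact $[A_\tr\;B_\tr]\in\mathcal{N}_\Sigma$, which holds because the true disturbance sequence $W_-$ lies in $\mathcal{M}_w$ by Assumption~\ref{ass:zon-noise} and Definition~\ref{def:self-concat}.

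Fix $[A\;B]\in\mathcal{N}_\Sigma$. By definition there is some $W\in\mathcal{M}_w$ with $X_+ = AX_- + BU_- + W$, that is,
\begin{equation*}
[A\;B]\,D_- = X_+ - W .
\end{equation*}
Assumption~\ref{ass:rank_D} says $D_-=[X_-^\top\;U_-^\top]^\top$ has full row rank $n_x+n_u$, so $D_-D_-^\dagger = I_{n_x+n_u}$; here $D_-^\dagger = D_-^\top(D_-D_-^\top)^{-1}$ is a right inverse. Right-multiplying the identity by $D_-^\dagger$ gives
\begin{equation*}
[A\;B] = (X_+ - W)\,D_-^\dagger .
\end{equation*}
Since $W\in\mathcal{M}_w$, the matrix $X_+ - W$ belongs to the set $X_+ - \mathcal{M}_w$. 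Hence $(X_+-W)D_-^\dagger$ belongs to $(X_+-\mathcal{M}_w)D_-^\dagger=\mathcal{M}^\Sigma$, which proves $\mathcal{N}_\Sigma\subseteq\mathcal{M}^\Sigma$.

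It remains to check that $\mathcal{M}^\Sigma$ is a well-defined matrix zonotope and coincides with the set image used above. The map $M\mapsto (X_+-M)D_-^\dagger$ is affine. Applying it to $\mathcal{M}_w=\zono{C_{\mathcal{M}_w},\tilde G_{\mathcal{M}_w}}$ gives center $(X_+-C_{\mathcal{M}_w})D_-^\dagger$ and generators $-G^{(i)}_{\mathcal{M}_w}D_-^\dagger$, with the same coefficients $\alpha^{(i)}\in[-1,1]$. The image of the set is therefore exactly this matrix zonotope, with no over-approximation involved.

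The only delicate step is the right-inverse identity $D_-D_-^\dagger=I$. This is precisely where full row rank is needed: without it, $[A\;B]$ would be determined only up to the left null space of $D_-$, and the inclusion could fail.
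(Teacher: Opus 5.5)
Your proof is correct and follows the standard argument from the cited source; the paper itself states this lemma without proof. The argument takes a witness $W\in\mathcal{M}_w$ for $[A\;B]\in\mathcal{N}_\Sigma$, right-multiplies $[A\;B]D_- = X_+ - W$ by the right inverse $D_-^\dagger$ (which exists under full row rank), and deduces the membership of the true matrices from $[A_{\mathrm{tr}}\;B_{\mathrm{tr}}]\in\mathcal{N}_\Sigma$; your additional check that the affine image of $\mathcal{M}_w$ is exactly the stated matrix zonotope is a harmless extra.
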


The IRA construction employs two data-driven model sets for the fine and coarse discretizations. Let $D_f=(U_{f,-},X_f)$ denote fine-step data, and $D_c=(U_{c,-},X_c)$ the coarse-step data obtained by subsampling $D_f$ every $N_s$ steps, with $D_{f,-} = [X_{f,-}^\top\;U_{f,-}^\top]^\top$ and $D_{c,-} = [X_{c,-}^\top\;U_{c,-}^\top]^\top$.
If both satisfy Assumption~\ref{ass:rank_D}, Lemma~\ref{lm:sigmaM} gives the model sets
\begin{align}
\hat{\mathcal{M}}_\Sigma^f
&=
\left(X_{f,+} - \mathcal{M}_w^f\right)\left(D_{f,-}\right)^\dagger,
\label{eq:Mf}\\
\hat{\mathcal{M}}_\Sigma^c
&=
\left(X_{c,+} - \mathcal{M}_w^c\right)\left(D_{c,-}\right)^\dagger,
\label{eq:Mc}
\end{align}
with $[A_f\; B_f] \in \hat{\mathcal{M}}_\Sigma^f$ and $[A_c\; B_c] \in \hat{\mathcal{M}}_\Sigma^c$. \label{eq:cf-membership}

\begin{remark}
In the IRA setting, Assumption~\ref{ass:rank_D} is required for both $D_{f,-}$ and $D_{c,-}$. Since $D_{c,-}$ is obtained by subsampling the fine-step data, full row rank at the fine resolution does not, in general, imply full row rank at the coarse resolution. In all experiments, this condition was verified for both discretizations prior to constructing $\hat M_\Sigma^f$ and $\hat M_\Sigma^c$.
\end{remark}

In the data-driven setting, each MZ-zonotope multiplication introduces independent scaling factors $\alpha_{(i)} \in [-1,1]$, treated as independent across time steps.

\begin{proposition}\label{prop:dd_depend}
Let $\hat{\mathcal{M}}_\Sigma^c$ and $\hat{\mathcal{M}}_\Sigma^f$ denote the matrix zonotopes constructed from data at step sizes $\Delta_c$ and $\Delta_f$ with $\Delta_c = N_s \Delta_f$, respectively. Let $\hat{\mathcal{R}}_1^c$ denote the data-driven reachable set obtained by a single propagation step of size $\Delta_c$, and let $\hat{\mathcal{R}}_{N_s}^f$ denote the set obtained by $N_s$ successive propagation steps of size $\Delta_f$. Both sets correspond to the same physical time $t = \Delta_c$. Then, in general,
\begin{equation}
\hat{\mathcal{R}}_1^c \neq \hat{\mathcal{R}}_{N_s}^f.
\end{equation}
The discrepancy arises because the number of independent scaling factors differs between the two discretizations: a single coarse step introduces one set of factors, whereas $N_s$ fine steps introduce $N_s$ independent sets.
\end{proposition}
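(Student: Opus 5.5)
The plan is to write both sets explicitly as images of the scaling factors and compare them. Since the claim is ``in general'', it suffices to (i) give the explicit set-valued expressions and (ii) exhibit one concrete instance where they differ. Write $\hat{\mathcal{M}}_\Sigma^f = \zono{C^f,\tilde G^f}$ and $\hat{\mathcal{M}}_\Sigma^c=\zono{C^c,\tilde G^c}$. Split each center and generator as $C=[C_A\;C_B]$ and $G^{(i)}=[G_A^{(i)}\;G_B^{(i)}]$.

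\textbf{Step 1: the coarse set.} One coarse step gives
\[
\hat{\mathcal{R}}_1^c=\Bigl\{(C_A^c+\textstyle\sum_i\alpha_i G^{c,(i)}_A)x_0+(C_B^c+\sum_i\alpha_i G^{c,(i)}_B)u+w\Bigr\},
\]
with $x_0\in\mathcal{X}_0$, $u\in\mathcal{U}$, $w\in\mathcal{Z}_w^c$. This expression is bilinear in a single factor vector $\alpha\in[-1,1]^{\gamma_c}$.

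\textbf{Step 2: the fine set.} Unrolling $N_s$ fine steps with independent factor vectors $\alpha^{(1)},\dots,\alpha^{(N_s)}$ gives
\[
\hat{\mathcal{R}}_{N_s}^f=\Bigl\{\textstyle\prod_{j=N_s}^{1}A(\alpha^{(j)})\,x_0+\sum_{k}\prod_{j>k}A(\alpha^{(j)})\bigl(B(\alpha^{(k)})u_k+w_k\bigr)\Bigr\},
\]
where $A(\alpha)=C_A^f+\sum_i\alpha_iG_A^{f,(i)}$ and $B(\alpha)$ is defined analogously. This is a polynomial of degree up to $N_s$ in $N_s$ independent factor vectors. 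The fine set also uses $N_s$ independent inputs, whereas the coarse step uses one.

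\textbf{Step 3: a counterexample.} The structural difference appears already in a scalar case. Take $n_x=1$, $B=0$, and no noise in the propagation, i.e.\ the uncertainty in $\hat{\mathcal{M}}_\Sigma$ comes only from the generators. Suppose the fine MZ is $[a-\epsilon,a+\epsilon]$ and the coarse MZ is $[b-\delta,b+\delta]$, and take $\mathcal{X}_0=\{1\}$.
\begin{itemize}
\item The coarse set is $[b-\delta,b+\delta]$.
\item The fine set is $\{\prod_j(a+\alpha_j\epsilon)\}=[(a-\epsilon)^{N_s},(a+\epsilon)^{N_s}]$ for $a>\epsilon>0$.
\end{itemize}
These intervals coincide only if $b=\tfrac12[(a+\epsilon)^{N_s}+(a-\epsilon)^{N_s}]$ and $\delta=\tfrac12[(a+\epsilon)^{N_s}-(a-\epsilon)^{N_s}]$. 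The data construction does not enforce these equalities, since $C^c$ and $G^c$ are computed from the subsampled data $X_{c,+},D_{c,-}$ and $\mathcal{M}_w^c$ independently of the fine quantities. I would exhibit concrete data, or simply generic data, for which they fail. Nonlinear growth in $\epsilon$ already breaks equality whenever the coarse radius scales linearly with the noise level. For $n_x\ge2$, the same mechanism shows that the fine set is generally not even a zonotope (products of independent factors), while the coarse set is. In both cases the sets differ.

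\textbf{Main obstacle.} The hard part is making ``in general'' precise without depending on the particular data. I would formalize it as: the set of data and parameter values where equality holds lies on a proper algebraic subset, and exhibiting one point off that subset (the scalar example above) suffices. The last step is to attribute the discrepancy to the factor count. With $\alpha^{(1)}=\dots=\alpha^{(N_s)}$ forced equal, the fine expression becomes a single-factor polynomial map, showing that independence of the $N_s$ factor vectors is what enlarges the set beyond a single coarse step's parameterization.
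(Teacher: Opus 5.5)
The paper gives no proof of this proposition beyond the factor-count sentence in the statement and the divergence visible in Fig.~\ref{fig:mb_comparison}. Your plan of explicit parameterizations plus one concrete witness is therefore the right way to make the inequality rigorous. As written, however, it has two gaps.

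\textbf{Your witness is not an instance of the construction, and it is never instantiated.} The generators of $\hat{\mathcal{M}}_\Sigma^f$ and $\hat{\mathcal{M}}_\Sigma^c$ come from $\mathcal{M}_w$, the self-concatenation of the same $\mathcal{Z}_w$ that is added at every propagation step. So you cannot keep $\epsilon,\delta>0$ while removing the noise from the propagation. If you remove it consistently ($\mathcal{Z}_w=\{0\}$), your scalar example collapses: both model sets are points, $a=A_f$, $b=X_{c,+}X_{c,-}^\dagger=A_f^{N_s}$, and the two reachable sets coincide. Saying that $C^c,G^c$ are ``computed independently'' does not replace an actual witness.

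A clean repair uses the scalar, input-free case ($D_-=X_-$). Take noise-free data with $A_f=a>0$, a valid bound $\mathcal{Z}_w=[-\eta,\eta]$ with small $\eta>0$, $\mathcal{X}_0=\{1\}$, and exact products $\{Mz\}$ as in the proof of Proposition~\ref{prop:mono}. Then $b=a^{N_s}$ and $\epsilon=\eta\|X_-\|_1/\|X_-\|_2^2>0$. The coarse set is symmetric about $a^{N_s}$ for any zero-centered coarse noise set. The fine set, by contrast, is an interval with midpoint
\[
\begin{aligned}
m_f&=\frac{(a+\epsilon)^{N_s}+(a-\epsilon)^{N_s}}{2}+\frac{\eta}{2}\sum_{k=0}^{N_s-1}\bigl[(a+\epsilon)^k-(a-\epsilon)^k\bigr]\\
&\ge a^{N_s}+\frac{N_s(N_s-1)}{2}\,a^{N_s-2}\epsilon^2>a^{N_s},
\end{aligned}
\]
so the sets differ, with no genericity argument needed.

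Under exact products the discrepancy sits in the midpoint, not the radius. For $N_s=2$ the half-width $\frac12[(a+\epsilon)^2-(a-\epsilon)^2]=2a\epsilon$ is linear in $\epsilon$, so your ``nonlinear growth'' remark says nothing about radii there. Under the zonotope over-approximation the paper actually computes, the fine center is $a^{N_s}=b$, so in that setting the comparison has to be made on the radii instead.

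\textbf{The attribution step is contradicted by your own example.} For $a>\epsilon>0$, the product $\prod_j(a+\alpha_j\epsilon)$ is increasing in every $\alpha_j$, so its range is already attained on the diagonal $\alpha_1=\dots=\alpha_{N_s}$. Forcing the factor vectors equal therefore leaves the fine interval unchanged, and it still differs from $[b-\delta,b+\delta]$. In that example the discrepancy comes from $\hat{\mathcal{M}}_\Sigma^c$ not being the image of $\hat{\mathcal{M}}_\Sigma^f$ under the nonlinear map $M\mapsto M^{N_s}$; it is also built from different data and a different noise set. It does not come from the number of independent factors.

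Independence strictly enlarges the fine set only in cases such as a sign change ($0<a<\epsilon$, where $\{m_1m_2\}$ contains negative values but $\{m^2\}$ does not) or non-commuting generators for $n_x\ge2$. Even then, the forced-equal set is generally not $\hat{\mathcal{R}}_1^c$, so this comparison cannot establish causation by itself. Since the paper states this clause only as a heuristic, either present it as such or replace it with a precise claim and prove that, for example strict containment of the forced-equal set in $\hat{\mathcal{R}}_{N_s}^f$ for a specific example.

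Finally, your remark that the coarse set is a zonotope while the fine set is not holds only for a singleton $\mathcal{X}_0\times\mathcal{U}$. With exact products, $\{Mz\}$ is already bilinear in the factors after one coarse step. With the computed over-approximation, both sets are zonotopes.
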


\section{Interpolated Reachability Analysis}\label{sec:ira}


Using the model sets $\hat{\mathcal{M}}_\Sigma^c$ and $\hat{\mathcal{M}}_\Sigma^f$ from \eqref{eq:Mf}--\eqref{eq:Mc}, IRA computes guaranteed outer approximations at every fine time point $t=j\Delta_f$, $j=0,\ldots,KN_s$, while reducing computational cost.

\subsection{Phase 1: Anchor Computation}

At coarse time points $t_k = k \Delta_c$ for $k = 0, 1, \ldots, K$, data-driven reachable sets are computed sequentially using the coarse model set. These sets serve as fixed reference endpoints for the subsequent interpolation and are therefore termed anchor sets:
\begin{equation}\label{eq:anchor}
\hat{\mathcal{R}}_{k+1}^{\mathrm{dd}} = \hat{\mathcal{M}}_c^\Sigma (\hat{\mathcal{R}}_k^{\mathrm{dd}} \times \mathcal{U}^c) \oplus \mathcal{Z}_w^c, \quad \hat{\mathcal{R}}_0^{\mathrm{dd}} = \mathcal{X}_0.
\end{equation}
Phase 1 is sequential but involves only $K$ multiplications instead of $KN_s$, reducing the sequential depth.

\subsection{Coarse Noise Estimation from Data}\label{subsec:noise_est}

The anchor computation~\eqref{eq:anchor} requires $\mathcal{Z}_w^c$. Since $A_c$ is unknown in the data-driven setting, $\mathcal{Z}_w^c$ must be estimated from the fine-step model set.

Unrolling $N_s$ fine-step disturbances over one coarse interval yields
\begin{equation}\label{eq:w_c_true}
\mathcal{Z}_w^c = \bigoplus_{i=0}^{N_s - 1} A_f^i \, \mathcal{Z}_w^f.
\end{equation}
Since $A_f$ is unknown, we replace it by the A-block matrix zonotope
\begin{equation}\label{eq:MfA}
\hat{\mathcal{M}}_f^A = \zono{(C_f)_{(\cdot,\,1:n_x)},\; (\tilde{G}_f)_{(\cdot,\,1:n_x)}},
\end{equation}
extracted from the first $n_x$ columns of $\hat{\mathcal{M}}_f^\Sigma$, so that $A_f \in \hat{\mathcal{M}}_f^A$.

The estimated coarse noise is computed iteratively:
\begin{equation}\label{eq:w_c_dd}
\hat{\mathcal{S}}_0 = \mathcal{Z}_w^f,  \hat{\mathcal{S}}_i = \hat{\mathcal{M}}_f^A \hat{\mathcal{S}}_{i-1} \oplus \mathcal{Z}_w^f,  i = 1, \ldots, N_s - 1.
\end{equation}
The data-driven coarse noise estimate is $\hat{\mathcal{Z}}_w^c = \hat{\mathcal{S}}_{N_s - 1}$.

\begin{proposition}\label{prop:noise_overapprox}
The estimated coarse noise $\hat{\mathcal{Z}}_w^c = \hat{\mathcal{S}}_{N_s-1}$ computed via~\eqref{eq:w_c_dd} satisfies $\mathcal{Z}_w^c \subseteq \hat{\mathcal{Z}}_w^c$.
\end{proposition}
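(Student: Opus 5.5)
We prove the stronger statement that for every $i=0,\dots,N_s-1$ the partial true sum $\mathcal{S}_i \triangleq \bigoplus_{l=0}^{i} A_f^l \mathcal{Z}_w^f$ is contained in $\hat{\mathcal{S}}_i$. Taking $i=N_s-1$ and using \eqref{eq:w_c_true} gives $\mathcal{Z}_w^c=\mathcal{S}_{N_s-1}\subseteq\hat{\mathcal{S}}_{N_s-1}=\hat{\mathcal{Z}}_w^c$. The argument is an induction on $i$. It rests on two facts. The first is the membership $A_f\in\hat{\mathcal{M}}_f^A$, obtained from Lemma~\ref{lm:sigmaM} together with \eqref{eq:MfA}. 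The second is that the MZ-zonotope product $\hat{\mathcal{M}}\mathcal{Z}$ is understood, as elsewhere in the paper, as an outer approximation of the set $\{Mz \mid M\in\hat{\mathcal{M}},\, z\in\mathcal{Z}\}$.

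First, we justify $A_f\in\hat{\mathcal{M}}_f^A$. By Lemma~\ref{lm:sigmaM} applied to the fine data, there exist $\alpha^{(l)}\in[-1,1]$ such that $[A_f\;B_f]=C_f+\sum_l \alpha^{(l)}G_f^{(l)}$. Restricting this identity to the first $n_x$ columns is a linear map, so $A_f=(C_f)_{(\cdot,1:n_x)}+\sum_l\alpha^{(l)}(G_f^{(l)})_{(\cdot,1:n_x)}$ with the same coefficients. This shows $A_f\in\hat{\mathcal{M}}_f^A$.

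The base case $i=0$ is immediate, since $\mathcal{S}_0=\mathcal{Z}_w^f=\hat{\mathcal{S}}_0$. For the inductive step, we use the recursion $\mathcal{S}_i=A_f\mathcal{S}_{i-1}\oplus\mathcal{Z}_w^f$. This holds because $A_f\bigoplus_{l=0}^{i-1}A_f^l\mathcal{Z}_w^f=\bigoplus_{l=1}^{i}A_f^l\mathcal{Z}_w^f$, as linear maps distribute over Minkowski sums. Now take any $s\in\mathcal{S}_i$ and write it as $s=A_f s'+w$ with $s'\in\mathcal{S}_{i-1}$ and $w\in\mathcal{Z}_w^f$. The induction hypothesis gives $s'\in\hat{\mathcal{S}}_{i-1}$. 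Combined with $A_f\in\hat{\mathcal{M}}_f^A$, this yields $A_f s'\in\{Ms''\mid M\in\hat{\mathcal{M}}_f^A,\,s''\in\hat{\mathcal{S}}_{i-1}\}\subseteq\hat{\mathcal{M}}_f^A\hat{\mathcal{S}}_{i-1}$. Adding $w$ gives $s\in\hat{\mathcal{S}}_i$, which completes the induction.

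The only delicate step is the containment property of the MZ-zonotope multiplication. Its exact image is generally not a zonotope: it is bilinear in the matrix coefficients $\alpha$ and the zonotope coefficients $\beta$. I would invoke the standard over-approximating product from \cite{althoff2010reachability,alanwar2023data}, which produces a zonotope with center $C\,c$ and generators $\{C g^{(j)},\,G^{(l)}c,\,G^{(l)}g^{(j)}\}$. Containment holds because each bilinear term $\alpha^{(l)}\beta^{(j)}$ lies in $[-1,1]$ and can be replaced by an independent factor. Any subsequent order reduction must be outer-approximating (e.g., Girard's method~\cite{girard2005reachability}), so it preserves the inclusion. A second subtlety is that the recursion \eqref{eq:w_c_dd} treats the matrix uncertainty independently at each $i$, while the true sum uses the same $A_f$ throughout. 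This direction only loosens the estimate, since a fixed $A_f$ is one admissible choice at every step, so the containment is unaffected.
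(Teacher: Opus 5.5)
Your proof is correct and follows essentially the same route as the paper: induction on $i$ with hypothesis $\bigoplus_{l=0}^{i-1}A_f^l\mathcal{Z}_w^f\subseteq\hat{\mathcal{S}}_{i-1}$, closed by $A_f\in\hat{\mathcal{M}}_f^A$ and $A_f\hat{\mathcal{S}}_{i-1}\subseteq\hat{\mathcal{M}}_f^A\hat{\mathcal{S}}_{i-1}$. Your additional justifications fill in details the paper leaves implicit without changing the argument: that restricting to the first $n_x$ columns keeps the same coefficients, and that the computed MZ-zonotope product and any order reduction are outer approximations.
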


\begin{proof}
We proceed by induction. At $i=0$, $\hat{\mathcal{S}}_0 = \mathcal{Z}_w^f \supseteq A_f^0\mathcal{Z}_w^f$. Suppose $\hat{\mathcal{S}}_{i-1} \supseteq \bigoplus_{l=0}^{i-1}A_f^l\mathcal{Z}_w^f$. Since $A_f\in\hat{\mathcal{M}}_f^A$, we have $A_f\hat{\mathcal{S}}_{i-1} \subseteq \hat{\mathcal{M}}_f^A\cdot\hat{\mathcal{S}}_{i-1}$, so $\bigoplus_{l=0}^{i}A_f^l\mathcal{Z}_w^f \subseteq \hat{\mathcal{S}}_i$.
\end{proof}

The recursion~\eqref{eq:w_c_dd} requires $N_s - 1$ multiplications with $\hat{\mathcal{M}}_f^A \in \R^{n_x \times n_x}$, making the cost modest relative to anchor computation.

\subsection{Phase 2: Parallel Interpolation from Anchors}

The intermediate sets within each coarse interval $[t_k, t_{k+1}]$ are then computed independently via
\begin{equation}\label{eq:interp}
\hat{\mathcal{R}}_{k,j}^{\mathrm{int}} = \hat{\mathcal{M}}_f^\Sigma (\hat{\mathcal{R}}_{k,j-1}^{\mathrm{int}} \times \mathcal{U}^f) \oplus \mathcal{Z}_w^f, \quad \hat{\mathcal{R}}_{k,0}^{\mathrm{int}} = \hat{\mathcal{R}}_k^{\mathrm{dd}},
\end{equation}
for $j = 1, \ldots, N_s - 1$. Within each coarse interval $[t_k, t_{k+1}]$, the recursion is initialized from the anchor $\hat{\mathcal{R}}_k^{\mathrm{dd}}$ and uses only the shared fine model set $\hat{\mathcal{M}}_f^\Sigma$. Because different intervals share no intermediate state, all $K$ intervals can be computed in parallel. Moreover, restarting from a fresh anchor at each coarse boundary prevents over-approximation error from accumulating across intervals.

For reference, the fine-grained data-driven chain and the model-based chain are
\begin{align}
\hat{\mathcal{R}}_{j+1}^{\mathrm{fine}} &= \hat{\mathcal{M}}_f^\Sigma (\hat{\mathcal{R}}_j^{\mathrm{fine}} \times \mathcal{U}^f) \oplus \mathcal{Z}_w^f, \quad \hat{\mathcal{R}}_0^{\mathrm{fine}} = \mathcal{X}_0, \label{eq:fine_chain} \\
\mathcal{R}_{j+1}^{\mathrm{mb}} &= A_f \, \mathcal{R}_j^{\mathrm{mb}} \oplus B_f \, \mathcal{U}^f \oplus \mathcal{Z}_w^f, \quad \mathcal{R}_0^{\mathrm{mb}} = \mathcal{X}_0. \label{eq:model_reach}
\end{align}
The fine chain~\eqref{eq:fine_chain} is sequential over $KN_s$ steps; IRA reduces the sequential depth to $K$ coarse steps plus a parallel phase of $N_s - 1$ fine steps. The complete procedure, including the TA-IRA variant, is given in Algorithm~\ref{alg:unified}.

\subsection{Over-Approximation Guarantee}

\begin{theorem}[Over-Approximation Guarantee]\label{thm:main}
Let Assumptions~\ref{ass:zon-noise} and~\ref{ass:rank_D} hold. Let $\mathcal{R}^*(t)$ denote the true reachable set at time $t$, i.e., the set of all states reachable from $\mathcal{X}_0$ under all admissible inputs $u \in \mathcal{U}$ and noise $w \in \mathcal{Z}_w$. Then for every coarse interval $k = 0, 1, \ldots, K-1$ and substep $j = 1, \ldots, N_s - 1$, the interpolated set satisfies
\begin{equation}
\mathcal{R}^*(t_{k,j}) \subseteq \hat{\mathcal{R}}_{k,j}^{\mathrm{int}}.
\end{equation}
\end{theorem}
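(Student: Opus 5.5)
The plan is a two-level induction: first over the coarse index $k$ to show that the anchors contain the true reachable sets at coarse times, then over the substep index $j$ within each coarse interval. Throughout, $t_{k,j}=t_k+j\Delta_f$. The true state satisfies the fine-step recursion $x((j+1)\Delta_f)=A_f x(j\Delta_f)+B_f u+w$ with $w\in\mathcal{Z}_w^f$. Over one coarse step it satisfies $x(t_{k+1})=A_c x(t_k)+B_c u+w^c$, where $w^c\in\mathcal{Z}_w^c$ is given by \eqref{eq:w_c_true}. Consistent with the paper's setting, inputs are held constant over each coarse interval, so $u\in\mathcal{U}^c=\mathcal{U}^f=\mathcal{U}$.

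\textbf{Step 1 (anchors).} I will show $\mathcal{R}^*(t_k)\subseteq\hat{\mathcal{R}}_k^{\mathrm{dd}}$ by induction on $k$. The base case is $\hat{\mathcal{R}}_0^{\mathrm{dd}}=\mathcal{X}_0=\mathcal{R}^*(0)$. For the inductive step, take any $x\in\mathcal{R}^*(t_{k+1})$. Write $x=[A_c\;B_c][x_k^\top\;u^\top]^\top+w^c$ with $x_k\in\mathcal{R}^*(t_k)\subseteq\hat{\mathcal{R}}_k^{\mathrm{dd}}$ (by the inductive hypothesis), $u\in\mathcal{U}^c$, and $w^c\in\mathcal{Z}_w^c$. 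Lemma~\ref{lm:sigmaM}, applied to the coarse data (Assumption~\ref{ass:rank_D} is assumed for $D_{c,-}$), gives $[A_c\;B_c]\in\hat{\mathcal{M}}_\Sigma^c$. The product of a matrix zonotope with a zonotope is defined as the set of all products $M z$ with $M\in\hat{\mathcal{M}}_\Sigma^c$ and $z\in\hat{\mathcal{R}}_k^{\mathrm{dd}}\times\mathcal{U}^c$, so $[A_c\;B_c](x_k,u)$ lies in that product. If the anchor recursion uses the data-driven noise $\hat{\mathcal{Z}}_w^c$, Proposition~\ref{prop:noise_overapprox} gives $w^c\in\mathcal{Z}_w^c\subseteq\hat{\mathcal{Z}}_w^c$. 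Hence $x\in\hat{\mathcal{R}}_{k+1}^{\mathrm{dd}}$. The coarse disturbance must be handled carefully: $\mathcal{Z}_w^c$ in \eqref{eq:w_c_true} must dominate the accumulated effect $\sum_i A_f^i w_i$ of the fine disturbances. This holds by construction of \eqref{eq:w_c_true}.

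\textbf{Step 2 (interpolation).} Fix $k$ and induct on $j$ to show $\mathcal{R}^*(t_{k,j})\subseteq\hat{\mathcal{R}}_{k,j}^{\mathrm{int}}$. The base case $j=0$ is Step 1, since $\hat{\mathcal{R}}_{k,0}^{\mathrm{int}}=\hat{\mathcal{R}}_k^{\mathrm{dd}}$. For the step, any $x\in\mathcal{R}^*(t_{k,j})$ can be written as $A_f x'+B_f u+w$ with $x'\in\mathcal{R}^*(t_{k,j-1})\subseteq\hat{\mathcal{R}}_{k,j-1}^{\mathrm{int}}$, $u\in\mathcal{U}^f$, and $w\in\mathcal{Z}_w^f$. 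Since $[A_f\;B_f]\in\hat{\mathcal{M}}_\Sigma^f$ by Lemma~\ref{lm:sigmaM}, the same membership argument places $x$ in $\hat{\mathcal{M}}_\Sigma^f(\hat{\mathcal{R}}_{k,j-1}^{\mathrm{int}}\times\mathcal{U}^f)\oplus\mathcal{Z}_w^f$. The argument for each coarse interval uses only its own anchor, which matches the parallel structure.

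\textbf{Main obstacle.} The delicate point is the semantics of the true reachable set at intermediate and coarse times. The property $\mathcal{R}^*(t_{k+1})=\{A_c x+B_c u+w^c\}$ must agree with the fine-step description, including the way inputs are held across substeps, so that both decompositions describe the same set $\mathcal{R}^*$. I would state this explicitly using $A_c=A_f^{N_s}$ and the identification of $\mathcal{Z}_w^c$ in \eqref{eq:w_c_true}; the corresponding input term is $\sum_i A_f^iB_f u$. If the input is allowed to vary within a coarse interval, I would instead argue that $B_c\mathcal{U}$ must be read as the coarse-discretized input set. Everything else reduces to monotonicity of matrix-zonotope multiplication and Minkowski sums under set inclusion. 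Any order reduction applied along the way is an over-approximation and only preserves containment.
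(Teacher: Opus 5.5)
Your proposal is correct and follows the paper's proof. Your Step~2 is exactly the paper's induction on $j$ using $[A_f\;B_f]\in\hat{\mathcal{M}}_\Sigma^f$, and your Step~1 supplies the anchor containment $\mathcal{R}^*(t_k)\subseteq\hat{\mathcal{R}}_k^{\mathrm{dd}}$ that the paper's proof asserts as its base case without justification; that step depends on your convention that inputs are held constant over each coarse interval, and your fallback of reinterpreting $B_c\mathcal{U}$ would not work inside the recursion~\eqref{eq:anchor}, where a single $u\in\mathcal{U}^c$ enters through one matrix in $\hat{\mathcal{M}}_\Sigma^c$, so state that convention as an explicit hypothesis.
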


\begin{proof}
We proceed by induction on the global fine-step index within each coarse interval.


\emph{Interpolation validity.} Fix a coarse interval $k$. We have $\hat{\mathcal{R}}_{k,0}^{\mathrm{int}} = \hat{\mathcal{R}}_k^{\mathrm{dd}} \supseteq \mathcal{R}^*(t_k)$. Suppose $\hat{\mathcal{R}}_{k,j-1}^{\mathrm{int}} \supseteq \mathcal{R}^*(t_{k,j-1})$ for some $j \geq 1$. Let $x \in \mathcal{R}^*(t_{k,j})$. By definition, there exist $\hat{x} \in \mathcal{R}^*(t_{k,j-1})$, $u \in \mathcal{U}^f$, and $w \in \mathcal{Z}_w^f$ such that $x = A_f \hat{x} + B_f u + w$.

Since $\begin{bmatrix}A_f & B_f\end{bmatrix} \in \hat{\mathcal{M}}_f^\Sigma$ and $\hat{x} \in \hat{\mathcal{R}}_{k,j-1}^{\mathrm{int}}$ by the inductive hypothesis,
\begin{equation}
x = A_f \hat{x} + B_f u + w \in \hat{\mathcal{M}}_f^\Sigma (\hat{\mathcal{R}}_{k,j-1}^{\mathrm{int}} \times \mathcal{U}^f) \oplus \mathcal{Z}_w^f = \hat{\mathcal{R}}_{k,j}^{\mathrm{int}},
\end{equation}
completing the induction.
\end{proof}

\subsection{Tightness Analysis}

To characterize when IRA interpolated sets are no larger than the fine DD chain, we first establish that the propagation operator preserves set inclusion.

\begin{proposition}[Monotonicity of Set Propagation]\label{prop:mono}
Let $\mathcal{Z}_1 \subseteq \mathcal{Z}_2$ be two zonotopes and let $\mathcal{N}$ be a MZ. Then
\begin{equation}
\mathcal{N} (\mathcal{Z}_1 \times \mathcal{U}) \oplus \mathcal{Z}_w \subseteq \mathcal{N} (\mathcal{Z}_2 \times \mathcal{U}) \oplus \mathcal{Z}_w.
\end{equation}
\end{proposition}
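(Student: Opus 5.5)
The plan is a direct elementwise argument, treating the matrix zonotope product as its exact set-valued image $\mathcal{N}\mathcal{S} = \{ N s \mid N \in \mathcal{N},\, s \in \mathcal{S}\}$. This is the semantics used in the proofs of Proposition~\ref{prop:noise_overapprox} and Theorem~\ref{thm:main}. We take an arbitrary point of the left-hand side and show it lies in the right-hand side. The argument proceeds by composing three elementary monotonicity facts, one for each operation appearing in the propagation operator.

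First, the Cartesian product is monotone. If $\mathcal{Z}_1 \subseteq \mathcal{Z}_2$, then any $(z,u)$ with $z \in \mathcal{Z}_1$ and $u \in \mathcal{U}$ also has $z \in \mathcal{Z}_2$. Hence $\mathcal{Z}_1 \times \mathcal{U} \subseteq \mathcal{Z}_2 \times \mathcal{U}$.

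Second, multiplication by the fixed MZ $\mathcal{N}$ is monotone. Let $y = N s$ with $N \in \mathcal{N}$ and $s \in \mathcal{Z}_1 \times \mathcal{U}$. By the first step, $s \in \mathcal{Z}_2 \times \mathcal{U}$, so $y \in \mathcal{N}(\mathcal{Z}_2 \times \mathcal{U})$. The key point is that the same factor matrix $N$, that is, the same scaling factors $\alpha^{(i)}$, can be reused, since the membership condition on $N$ does not depend on which set $s$ comes from.

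Third, the Minkowski sum with the fixed set $\mathcal{Z}_w$ is monotone. A point $y + w$ with $y \in \mathcal{N}(\mathcal{Z}_1 \times \mathcal{U})$ and $w \in \mathcal{Z}_w$ has $y \in \mathcal{N}(\mathcal{Z}_2 \times \mathcal{U})$ by the second step. Therefore $y + w \in \mathcal{N}(\mathcal{Z}_2 \times \mathcal{U}) \oplus \mathcal{Z}_w$.

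Chaining the three inclusions yields the claim.

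The only real subtlety, and the step I expect to need care, is the interpretation of $\mathcal{N}\mathcal{S}$. In practice, MZ--zonotope multiplication in \cite{alanwar2023data,althoff2010reachability} is implemented as a zonotope over-approximation, with generators formed by products of center and generator terms and possibly followed by order reduction. Monotonicity of that computed object is not automatic from the formula, because generator representations of $\mathcal{Z}_1$ and $\mathcal{Z}_2$ are unrelated.

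I would therefore state explicitly that the proposition concerns the exact set image, which is what the containment proofs above use. If a statement about the computed zonotopes were needed, I would first try the following. Write $\mathcal{N}\mathcal{S} = \bigcup_{N \in \mathcal{N}} N\mathcal{S}$. Then note that the standard enclosure is exact for the bilinear image whenever no reduction is applied, so inclusion transfers. Reduction steps would have to be excluded or applied after the comparison.
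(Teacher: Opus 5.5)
Your argument is correct and is essentially the paper's proof. The paper also takes the exact image $\mathcal{N}\mathcal{Z} = \{Mz \mid M \in \mathcal{N},\, z \in \mathcal{Z}\}$, notes that enlarging the admissible arguments from $\mathcal{Z}_1 \times \mathcal{U}$ to $\mathcal{Z}_2 \times \mathcal{U}$ enlarges the image, and concludes by monotonicity of the Minkowski sum with $\mathcal{Z}_w$.

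One caution about your optional aside, which does not affect your proof of the proposition as stated. The standard MZ--zonotope enclosure is \emph{not} exact even without reduction, because it treats the products $\alpha^{(i)}\beta^{(j)}$ as independent factors. For example, for the scalar sets $[1,3]\cdot[1,3]$ it returns $[-1,9]$ instead of $[1,9]$. So the justification you sketch for the computed sets does not go through.
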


\begin{proof}
Since $\mathcal{N} \mathcal{Z} = \{Mz \mid M \in \mathcal{N}, z \in \mathcal{Z}\}$, if $\mathcal{Z}_1 \subseteq \mathcal{Z}_2$ then $\{Mz \mid z \in \mathcal{Z}_1{\times}\mathcal{U}\} \subseteq \{Mz \mid z \in \mathcal{Z}_2{\times}\mathcal{U}\}$, and the Minkowski sum with $\mathcal{Z}_w$ preserves containment.
\end{proof}

The following result shows that if the coarse anchor $\hat{\mathcal{R}}_k^{\mathrm{dd}}$ at time $t_k$ is contained in the fine DD chain set $\hat{\mathcal{R}}_{kN_s}^{\mathrm{fine}}$ at the same physical time, then the IRA interpolated sets $\hat{\mathcal{R}}_{k,j}^{\mathrm{int}}$ remain contained in the corresponding fine DD sets $\hat{\mathcal{R}}_{kN_s+j}^{\mathrm{fine}}$ for all substeps within that interval.

\begin{proposition}[Conditional Tightness]\label{prop:tight}
If $\hat{\mathcal{R}}_k^{\mathrm{dd}} \subseteq \hat{\mathcal{R}}_{kN_s}^{\mathrm{fine}}$ at coarse time $t_k$, then
\begin{equation}\label{eq:tight}
\hat{\mathcal{R}}_{k,j}^{\mathrm{int}} \subseteq \hat{\mathcal{R}}_{kN_s + j}^{\mathrm{fine}}, \quad j = 1, \ldots, N_s - 1.
\end{equation}
\end{proposition}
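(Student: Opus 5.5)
The plan is induction on the substep index $j$ within the fixed coarse interval $k$, with Proposition~\ref{prop:mono} as the only tool needed. Both recursions, the interpolation recursion~\eqref{eq:interp} and the fine chain~\eqref{eq:fine_chain}, apply the same one-step operator
\begin{equation*}
\Phi(\mathcal{Z}) \triangleq \hat{\mathcal{M}}_f^\Sigma (\mathcal{Z} \times \mathcal{U}^f) \oplus \mathcal{Z}_w^f .
\end{equation*}
Within the interval we have $\hat{\mathcal{R}}_{k,j}^{\mathrm{int}} = \Phi(\hat{\mathcal{R}}_{k,j-1}^{\mathrm{int}})$. Likewise, re-indexing the fine chain at global index $kN_s + j$ gives $\hat{\mathcal{R}}_{kN_s+j}^{\mathrm{fine}} = \Phi(\hat{\mathcal{R}}_{kN_s+j-1}^{\mathrm{fine}})$. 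Because the matrix zonotope, the input set and the disturbance set are identical in both recursions, the two chains differ only in their initial sets at $t_k$.

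\emph{Base case ($j=0$).} This is the hypothesis itself: $\hat{\mathcal{R}}_{k,0}^{\mathrm{int}} = \hat{\mathcal{R}}_k^{\mathrm{dd}} \subseteq \hat{\mathcal{R}}_{kN_s}^{\mathrm{fine}}$.

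\emph{Inductive step.} Assume $\hat{\mathcal{R}}_{k,j-1}^{\mathrm{int}} \subseteq \hat{\mathcal{R}}_{kN_s+j-1}^{\mathrm{fine}}$ for some $1 \le j \le N_s-1$. Proposition~\ref{prop:mono}, applied with $\mathcal{N} = \hat{\mathcal{M}}_f^\Sigma$, $\mathcal{U} = \mathcal{U}^f$ and $\mathcal{Z}_w = \mathcal{Z}_w^f$, gives $\Phi(\hat{\mathcal{R}}_{k,j-1}^{\mathrm{int}}) \subseteq \Phi(\hat{\mathcal{R}}_{kN_s+j-1}^{\mathrm{fine}})$. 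By the two identities above, this is exactly~\eqref{eq:tight} at substep $j$. Running $j = 1,\ldots,N_s-1$ completes the proof.

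The main subtlety is that the argument must be about \emph{exact} set images, which is the semantics used in Proposition~\ref{prop:mono}. If order reduction or the zonotopic over-approximation of the matrix-zonotope product were applied differently in the two chains, monotonicity of the computed sets could fail. I would therefore state explicitly that $\Phi$ denotes the same operator, with the same reduction, at every step in both recursions. Under that convention the inclusion transfers unchanged. I would not attempt to verify the hypothesis $\hat{\mathcal{R}}_k^{\mathrm{dd}} \subseteq \hat{\mathcal{R}}_{kN_s}^{\mathrm{fine}}$ itself, since the proposition is conditional on it.
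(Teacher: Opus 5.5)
Your proof is correct and matches the paper's: induction on $j$, with the hypothesis as the base case and Proposition~\ref{prop:mono} applied to the common one-step operator $\mathcal{Z} \mapsto \hat{\mathcal{M}}_f^\Sigma(\mathcal{Z}\times\mathcal{U}^f)\oplus\mathcal{Z}_w^f$ for the inductive step. One caution on your closing aside: using the same order reduction in both chains does not by itself make the computed operator monotone, because reduction acts on generator representations rather than on sets ($\mathcal{Z}_1 \subseteq \mathcal{Z}_2$ need not give $\mathrm{red}(\mathcal{Z}_1) \subseteq \mathrm{red}(\mathcal{Z}_2)$), so the statement should be read, as the paper reads it, for exact set images.
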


\begin{proof}
At $j = 0$, $\hat{\mathcal{R}}_{k,0}^{\mathrm{int}} = \hat{\mathcal{R}}_k^{\mathrm{dd}} \subseteq \hat{\mathcal{R}}_{kN_s}^{\mathrm{fine}}$ by hypothesis. The recursions \eqref{eq:interp} and \eqref{eq:fine_chain} apply the same operator $\mathcal{Z} \mapsto \hat{\mathcal{M}}_f^\Sigma (\mathcal{Z} \times \mathcal{U}^f) \oplus \mathcal{Z}_w^f$. By Proposition~\ref{prop:mono}, containment at step $j-1$ implies containment at step $j$, yielding \eqref{eq:tight} by induction.
\end{proof}

\begin{remark}\label{rem:tightness_premise}
The premise $\hat{\mathcal{R}}_k^{\mathrm{dd}} \subseteq \hat{\mathcal{R}}_{kN_s}^{\mathrm{fine}}$ is governed by two competing effects. On the one hand, the fine chain performs $kN_s$ MZ-zonotope multiplications, each of which introduces independent generator scaling factors and requires order reduction; the resulting over-approximation error accumulates over all $kN_s$ steps (the wrapping effect). The coarse chain performs only $k$ such multiplications, producing significantly less accumulated error. On the other hand, the data-driven coarse noise estimate $\hat{\mathcal{Z}}_w^c \supseteq \mathcal{Z}_w^c$ (Proposition~\ref{prop:noise_overapprox}) inflates the coarse anchors beyond what the true coarse noise would yield. For small $k$, this noise inflation may dominate, causing the coarse anchor to exceed the fine chain. As $k$ grows, the wrapping effect in the fine chain grows as $\mathcal{O}(kN_s)$ while the coarse noise inflation remains constant per step, so the premise becomes increasingly likely to hold. Moreover, the condition is checkable at runtime: after computing each anchor $\hat{\mathcal{R}}_k^{\mathrm{dd}}$, one can verify interval hull containment in $\hat{\mathcal{R}}_{kN_s}^{\mathrm{fine}}$ in $\mathcal{O}(n_x)$ time. The experimental results in Table~\ref{tab:tradeoff} confirm that for moderate $K$ the width ratio falls below $1.0$, indicating that the premise holds in practice.
\hfill $\lrcorner$
\end{remark}

\subsection{Computational Complexity}

Let $C_\cdot(\gamma, h)$ denote the per-step MZ-zonotope multiplication cost. The full fine chain~\eqref{eq:fine_chain} is strictly sequential over $K N_s$ steps, with total work and wall-clock depth
\begin{equation}
\mathcal{C}_{\mathrm{fine}} = K N_s \cdot C_\cdot(\gamma_f, h_f).
\end{equation}

The IRA approach decomposes this into two phases with different parallelism properties. The total work is
\begin{equation}\label{eq:cost_ira}
\mathcal{C}_{\mathrm{IRA}} = \underbrace{K \, C_\cdot(\gamma_c, h_c)}_{\text{Phase 1 (sequential)}} + \underbrace{K(N_s - 1) C_\cdot(\gamma_f, h_f)}_{\text{Phase 2 (parallelizable)}},
\end{equation}
while the parallel wall-clock depth on $K$ processors is
\begin{equation}\label{eq:depth_ira}
\mathcal{D}_{\mathrm{IRA}} = \underbrace{K \, C_\cdot(\gamma_c, h_c)}_{\text{sequential anchors}} + \underbrace{(N_s - 1) C_\cdot(\gamma_f, h_f)}_{\text{longest parallel chain}}.
\end{equation}
The parallel speedup over the fine chain is
\begin{equation}
\frac{\mathcal{C}_{\mathrm{fine}}}{\mathcal{D}_{\mathrm{IRA}}} \approx \frac{K N_s}{K + N_s - 1},
\end{equation}
assuming $C_\cdot(\gamma_c, h_c) \approx C_\cdot(\gamma_f, h_f)$, approaching $N_s$ as $K$ grows.

\section{Transformer-Accelerated Interpolation}\label{sec:ta-ira}

\subsection{Zonotope Token Representation}

Each zonotope $\mathcal{Z} = \langle c, [g_1, \ldots, g_h] \rangle \subset \R^{n_x}$ is first reduced to $\kappa = \rho n_x$ generators~\cite{girard2005reachability} and then tokenized as
\begin{equation}\label{eq:tokenize}
\mathrm{tok}(\mathcal{Z}) = [c,\; g_1,\; \ldots,\; g_\kappa] \in \R^{(\kappa+1) \times n_x}.
\end{equation}
Each column is a token of dimension $n_x$, augmented with a time fraction $\tau = t / T$, yielding $d_{\mathrm{tok}} = n_x + 1$ and $L_z = \kappa + 1$ tokens per zonotope. The inverse mapping $\mathrm{tok}^{-1}$ recovers the predicted zonotope.

\subsection{Encoder-Decoder Transformer Architecture}

An encoder-decoder Transformer processes the tokenized anchor pair. The encoder input is
\begin{equation}\label{eq:enc_input}
\mathbf{S}_{\mathrm{enc}} = \big[\mathrm{tok}(\hat{\mathcal{R}}_k^{\mathrm{dd}}),\; \mathrm{tok}(\hat{\mathcal{R}}_{k+1}^{\mathrm{dd}}) \big] \in \R^{2L_z \times d_{\mathrm{tok}}},
\end{equation}
with bidirectional self-attention across all $2L_z$ tokens.

Each token receives additive type, position, and step embeddings. The decoder uses $L_z$ learnable queries with causal self-attention and cross-attention to the encoder, producing
\begin{equation}\label{eq:dec_output}
\hat{\mathbf{S}}_{\mathrm{dec}} = f_\theta(\mathbf{S}_{\mathrm{enc}}) \in \R^{L_z \times d_{\mathrm{tok}}},
\end{equation}
and the predicted zonotope is $\hat{\mathcal{R}}_{k,j}^{\mathrm{TF}} = \mathrm{tok}^{-1}(\hat{\mathbf{S}}_{\mathrm{dec}})$.

\subsection{Autoregressive Inference}

Intermediate sets are predicted autoregressively:
\begin{equation}\label{eq:autoregressive}
\hat{\mathcal{R}}_{k,j}^{\mathrm{TF}} = \mathrm{tok}^{-1}\big(f_\theta([\mathrm{tok}(\hat{\mathcal{R}}_{k,j-1}^{\mathrm{TF}}),\; \mathrm{tok}(\hat{\mathcal{R}}_{k+1}^{\mathrm{dd}})])\big),
\end{equation}
with $\hat{\mathcal{R}}_{k,0}^{\mathrm{TF}} = \hat{\mathcal{R}}_k^{\mathrm{dd}}$. Each prediction conditions on its predecessor while the endpoint anchor $\hat{\mathcal{R}}_{k+1}^{\mathrm{dd}}$ remains fixed.

\subsection{Training Data Generation}

Training data is generated by computing fine DD chains from geometrically augmented initial sets $\mathcal{X}_0^{(i)}$. Each chain of $KN_s$ steps yields $K(N_s - 1)$ autoregressive training pairs, where the encoder input is the tokenized pair of current set and interval endpoint, and the decoder target is the next fine-step set. The training loss is
\begin{equation}\label{eq:loss}
\mathcal{L}(\theta) = \frac{1}{N_{\mathrm{train}}} \sum_{i=1}^{N_{\mathrm{train}}} \| f_\theta(\mathbf{S}_{\mathrm{enc}}^{(i)}) - \mathbf{S}_{\mathrm{dec}}^{(i)} \|_F^2.
\end{equation}
For conformal calibration, $N_{\mathrm{traj}}$ trajectories are simulated from each augmented initial set.

\subsection{Conformal Prediction Calibration}


The Transformer prediction $\hat{\mathcal{R}}_{k,j}^{\mathrm{TF}}$ is a point estimate that does not inherently guarantee containment of the true reachable set. To provide a finite-sample coverage certificate, we employ split conformal prediction~\cite{hashemi2023data} with a max-norm nonconformity score. It is important to note that conformal prediction provides pointwise statistical coverage under exchangeability, which is weaker than deterministic set containment. The latter is not implied by Theorem~\ref{thm:conformal}; when deterministic set containment is required, one may revert to exact IRA, whose guarantee is given by Theorem~\ref{thm:main}.

The calibration is performed on a held-out subset of fine-step training chains, using the same fine-step prompts on which the network is trained. This ensures that the calibration and training distributions are identical, so that the conformal quantile accurately reflects the model prediction quality without conflating it with the coarse-to-fine distribution shift. For each calibration instance $i$, let $\{x_{k,j}^{(i,m)}\}_{m=1}^{N_{\mathrm{traj}}}$ denote the simulated trajectory states at the fine time step corresponding to substep $j$. The predicted zonotope $\hat{\mathcal{R}}_{k,j}^{\mathrm{TF}}$ is evaluated against these trajectories by computing the max-norm nonconformity score
\begin{equation}\label{eq:score}
s_i = \max_{d=1,\ldots,n_x}\; \max_{m=1,\ldots,N_{\mathrm{traj}}} \max\big( \hat{\ell}_{j,d}^{(i)} - x_{j,d}^{(i,m)},\; x_{j,d}^{(i,m)} - \hat{u}_{j,d}^{(i)} \big),
\end{equation}
where $[\hat{\ell}^{(i)}, \hat{u}^{(i)}]$ is the interval hull of $\hat{\mathcal{R}}_{k,j}^{\mathrm{TF}}$. The score $s_i$ measures the worst-case violation across all dimensions and trajectory samples, ensuring that a single scalar quantile suffices for joint coverage.

For a desired coverage level $1 - \delta$, the conformal quantile is
\begin{equation}\label{eq:quantile}
\hat{q} = \max\bigg(0,\;\mathrm{Quantile}\Big(\{s_i\}_{i=1}^{N_{\mathrm{cal}}},\; \frac{\lceil (N_{\mathrm{cal}}+1)(1-\delta) \rceil}{N_{\mathrm{cal}}} \Big)\bigg).
\end{equation}

The conformalized prediction set is obtained by inflating the predicted zonotope uniformly along each coordinate axis:
\begin{equation}\label{eq:conformal_set}
\tilde{\mathcal{R}}_{k,j}^{\mathrm{TF}} = \hat{\mathcal{R}}_{k,j}^{\mathrm{TF}} \oplus \langle 0, \hat{q} \cdot I_{n_x} \rangle.
\end{equation}

\begin{theorem}[Pointwise-in-Time Coverage]\label{thm:conformal}
Let $\tilde{\mathcal{R}}_{k,j}^{\mathrm{TF}}$ be the conformalized Transformer prediction~\eqref{eq:conformal_set} with coverage level $1-\delta$. Under exchangeability of the calibration and test data, for a new state sample $x_{k,j}$ drawn from the same system dynamics at substep $(k,j)$,
\begin{equation}\label{eq:coverage}
\Pr\big[x_{k,j} \in \tilde{\mathcal{R}}_{k,j}^{\mathrm{TF}} \big] \geq 1 - \delta.
\end{equation}
\end{theorem}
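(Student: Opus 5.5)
The plan is the standard split-conformal rank argument. The one adjustment is that the nonconformity score must be tied to membership in the conformalized zonotope $\tilde{\mathcal{R}}_{k,j}^{\mathrm{TF}}=\hat{\mathcal{R}}_{k,j}^{\mathrm{TF}}\oplus\langle 0,\hat q I_{n_x}\rangle$ itself, not to its interval hull.

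\emph{Step 1 (membership score).} For a predicted zonotope $\hat{\mathcal{R}}=\langle \hat c,\hat G\rangle$ and a state $x$, define the inflation radius
\begin{equation*}
\rho(x;\hat{\mathcal{R}})=\min\{q\ge 0 \mid x\in \hat{\mathcal{R}}\oplus\langle 0,qI_{n_x}\rangle\}.
\end{equation*}
This equals the linear program $\min\{\|x-\hat c-\hat G\alpha\|_\infty \mid \|\alpha\|_\infty\le 1\}$. The minimum is attained because the feasible set is compact. By construction,
\begin{equation*}
x\in\hat{\mathcal{R}}\oplus\langle 0,qI_{n_x}\rangle \iff \rho(x;\hat{\mathcal{R}})\le q .
\end{equation*}
For the $i$-th calibration instance I take the score $r_i=\max_m \rho(x^{(i,m)}_{k,j};\hat{\mathcal{R}}^{\mathrm{TF},(i)}_{k,j})$. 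For the test point I take $r_{\mathrm{test}}=\rho(x_{k,j};\hat{\mathcal{R}}^{\mathrm{TF}}_{k,j})$. A maximum over trajectories only enlarges the calibration scores, so the test score is stochastically dominated by a score exchangeable with $\{r_i\}$. I will make this precise by viewing the test instance as one more calibration-type instance, whose score is also a maximum over $N_{\mathrm{traj}}$ samples and upper-bounds $r_{\mathrm{test}}$ when $x_{k,j}$ is one of them.

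\emph{Step 2 (rank argument).} Under exchangeability of the $N_{\mathrm{cal}}+1$ instance scores, the rank of the test-instance score among them is uniform, with ties broken conservatively. Hence the test-instance score is at most the $\lceil (N_{\mathrm{cal}}+1)(1-\delta)\rceil$-th smallest calibration score with probability at least $1-\delta$. That order statistic is exactly the quantile in \eqref{eq:quantile}, and taking $\max(0,\cdot)$ only enlarges it. Combining with $r_{\mathrm{test}}\le$ test-instance score and the equivalence in Step 1 gives $\Pr[x_{k,j}\in\tilde{\mathcal{R}}^{\mathrm{TF}}_{k,j}]\ge 1-\delta$, which is \eqref{eq:coverage}.

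\emph{Main obstacle.} The difficulty is reconciling this with the interval-hull score \eqref{eq:score}. Write $\mathrm{IH}(\hat{\mathcal{R}})=\{x \mid \hat\ell\le x\le\hat u\}$ for the interval hull, i.e.\ the smallest axis-aligned box containing $\hat{\mathcal{R}}$. The hull score of a point is $\max\bigl(0,\max_d\max(\hat\ell_d-x_d,\,x_d-\hat u_d)\bigr)$, which is exactly its $\infty$-distance to $\mathrm{IH}(\hat{\mathcal{R}})$. Since $\hat{\mathcal{R}}\subseteq\mathrm{IH}(\hat{\mathcal{R}})$, this is at most $\rho$, with equality when $\hat{\mathcal{R}}$ is an axis-aligned box. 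So a quantile of hull scores certifies membership in $\mathrm{IH}(\hat{\mathcal{R}})\oplus\langle 0,\hat qI\rangle$ but not necessarily in the zonotope $\tilde{\mathcal{R}}^{\mathrm{TF}}_{k,j}$.

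My plan is therefore to state the proof with $\rho$ as the nonconformity score. This is a one-LP-per-sample change to \eqref{eq:score}, and $\hat q$ is then defined from the $r_i$. I will remark that it reduces to the paper's score when the decoded generator matrix is diagonal. The rank argument in Step 2 is unchanged; only the score-to-set equivalence in Step 1 requires this choice.
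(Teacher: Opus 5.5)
Your diagnosis is right, and it points to a flaw in the paper's own proof, not a gap in yours. The paper runs the same split-conformal rank argument with the interval-hull score \eqref{eq:score}. From $s\le\hat q$ it gets $x_d\in[\hat\ell_d-\hat q,\hat u_d+\hat q]$ for all $d$, and correctly notes that this box is the interval hull of $\tilde{\mathcal{R}}_{k,j}^{\mathrm{TF}}$. It then concludes ``hence $x\in\tilde{\mathcal{R}}_{k,j}^{\mathrm{TF}}$'', and that last implication is invalid.

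The statement with $\hat q$ built from \eqref{eq:score} can actually fail. Suppose the predictor always outputs the segment $\langle 0,(1,1)^\top\rangle\subset\mathbb{R}^2$ and every state equals $(0.9,-0.9)$. Then all scores are $-0.1$, so $\hat q=0$, $\tilde{\mathcal{R}}_{k,j}^{\mathrm{TF}}$ is the segment itself, and coverage is zero.

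So you are proving a corrected statement, and that is the right call. The only other repair is to keep \eqref{eq:score} and weaken the conclusion to the box $\mathrm{IH}(\hat{\mathcal{R}}_{k,j}^{\mathrm{TF}})\oplus\langle 0,\hat q I_{n_x}\rangle$, which is all the paper's argument certifies. Your $\rho$ is the $\infty$-distance to the zonotope, so Step 1 becomes an exact equivalence and the rank argument carries over verbatim. The cost is one LP per calibration sample instead of a closed-form hull bound. Your $\hat q$ is also never smaller than the hull-based one, since $\rho\ge\max(0,s)$ pointwise. In exchange, you get a certificate on the zonotope itself.

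Your handling of the test point is also more careful than the paper's. The paper silently compares the score of a single new state with calibration scores that are maxima over $N_{\mathrm{traj}}$ samples. Your coupling treats $x_{k,j}$ as one of the $N_{\mathrm{traj}}$ samples of an exchangeable test instance, so $r_{\mathrm{test}}\le r_{N_{\mathrm{cal}}+1}$, which makes that conservative step explicit.

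Two small fixes:
\begin{itemize}
\item The reduction to the paper's score holds when $\hat{\mathcal{R}}_{k,j}^{\mathrm{TF}}$ coincides with its interval hull, i.e.\ every generator is axis-aligned. ``Diagonal'' is the wrong condition, since the generator matrix is $n_x\times\kappa$ with $\kappa>n_x$.
\item The order-statistic step needs $\lceil (N_{\mathrm{cal}}+1)(1-\delta)\rceil\le N_{\mathrm{cal}}$; otherwise $\hat q=+\infty$.
\end{itemize}
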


\begin{proof}
By the standard guarantee of split conformal prediction~\cite{vovk2005algorithmic}, for any new test instance drawn exchangeably with the calibration data, the probability that its nonconformity score exceeds $\hat{q}$ is at most $\delta$. Since $s_i$ is defined as the maximum violation across all dimensions, $s \leq \hat{q}$ implies that every dimension $d$ satisfies $x_d \in [\hat{\ell}_d - \hat{q},\, \hat{u}_d + \hat{q}]$, which is precisely the interval hull of $\tilde{\mathcal{R}}_{k,j}^{\mathrm{TF}}$. Hence $x \in \tilde{\mathcal{R}}_{k,j}^{\mathrm{TF}}$.
\end{proof}


This pointwise coverage guarantee bounds the failure probability at any single substep. To obtain joint coverage over an entire coarse interval, we extend the score to a path-wise variant.

\begin{corollary}[Path-Wise Coverage]\label{cor:pathwise}
Define the path-wise nonconformity score
\begin{equation}\label{eq:pathscore}
\begin{aligned}
s_i^{\mathrm{path}}
= \max_{\substack{j=1,\ldots,N_s-1\\ d=1,\ldots,n_x\\ m=1,\ldots,N_{\mathrm{traj}}}}
\max \big(& \hat{\ell}_{j,d}^{(i)} - x_{j,d}^{(i,m)},  x_{j,d}^{(i,m)} - \hat{u}_{j,d}^{(i)} \big).
\end{aligned}
\end{equation}
and let $\hat{q}^{\mathrm{path}}$ be the conformal quantile computed from $\{s_i^{\mathrm{path}}\}_{i=1}^{N_{\mathrm{cal}}}$ at level $1-\delta$. Then for a new test instance, the inflated sets $\tilde{\mathcal{R}}_{k,j}^{\mathrm{TF}} = \hat{\mathcal{R}}_{k,j}^{\mathrm{TF}} \oplus \langle 0, \hat{q}^{\mathrm{path}} I_{n_x} \rangle$ satisfy
\begin{equation}
\Pr\Big[\forall j \in \{1,\ldots,N_s{-}1\}:\; x_{k,j} \in \tilde{\mathcal{R}}_{k,j}^{\mathrm{TF}} \Big] \geq 1 - \delta.
\end{equation}
\end{corollary}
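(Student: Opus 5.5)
The plan is to repeat the argument of Theorem~\ref{thm:conformal}, with the path-wise score $s^{\mathrm{path}}$ taking the place of the single-substep score $s$.

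\textbf{Step 1 (exchangeability).} Treat each calibration instance $i$ as one object. It consists of the whole interval prompt $(\hat{\mathcal{R}}_{k}^{\mathrm{dd}},\hat{\mathcal{R}}_{k+1}^{\mathrm{dd}})$, together with the autoregressive predictions $\hat{\mathcal{R}}_{k,j}^{\mathrm{TF}}$ for $j=1,\ldots,N_s-1$ and the simulated states $x_{j}^{(i,m)}$. The score $s_i^{\mathrm{path}}$ is a fixed measurable function of this object, and $f_\theta$ is trained on data disjoint from the calibration split. Hence, if the calibration instances and the new test instance are exchangeable, the scalars $s_1^{\mathrm{path}},\ldots,s_{N_{\mathrm{cal}}}^{\mathrm{path}},s^{\mathrm{path}}_{\mathrm{test}}$ are exchangeable too. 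Here exchangeability is required at the level of whole paths, not individual substeps.

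\textbf{Step 2 (quantile bound).} Apply the standard split conformal lemma~\cite{vovk2005algorithmic} to these scalars. The rank of $s^{\mathrm{path}}_{\mathrm{test}}$ among the $N_{\mathrm{cal}}+1$ scores is (super-)uniform, so with the $\lceil (N_{\mathrm{cal}}+1)(1-\delta)\rceil$-th order statistic as in~\eqref{eq:quantile} we get $\Pr[s^{\mathrm{path}}_{\mathrm{test}}\le \hat{q}^{\mathrm{path}}]\ge 1-\delta$. Taking $\max(0,\cdot)$ only enlarges $\hat{q}^{\mathrm{path}}$, so the bound still holds.

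\textbf{Step 3 (from score to sets).} On the event $s^{\mathrm{path}}_{\mathrm{test}}\le\hat{q}^{\mathrm{path}}$, the maximum in~\eqref{eq:pathscore} runs over all $j$, $d$ and $m$. Therefore, for every substep $j$ and coordinate $d$, we have $\hat{\ell}_{j,d}-\hat{q}^{\mathrm{path}}\le x_{j,d}\le \hat{u}_{j,d}+\hat{q}^{\mathrm{path}}$. This places $x_{k,j}$ in the interval hull of $\hat{\mathcal{R}}_{k,j}^{\mathrm{TF}}\oplus\langle 0,\hat{q}^{\mathrm{path}}I_{n_x}\rangle$ simultaneously for all $j$. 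The event in the corollary thus contains the score event, which gives the claimed probability.

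\textbf{Main obstacle.} Step 3 concludes membership in the interval hull, not in the zonotope $\tilde{\mathcal{R}}_{k,j}^{\mathrm{TF}}$ itself. A point inside the inflated box need not lie inside the inflated zonotope unless the predicted zonotope is axis-aligned. I would first handle this the way Theorem~\ref{thm:conformal} implicitly does: interpret containment through the interval hull, i.e., state the guarantee for $\mathrm{box}(\hat{\mathcal{R}}_{k,j}^{\mathrm{TF}})\oplus\langle 0,\hat{q}^{\mathrm{path}}I_{n_x}\rangle$. If strict zonotope containment is wanted, I would redefine the score as the smallest scaling $\epsilon$ such that $x\in\hat{\mathcal{R}}_{k,j}^{\mathrm{TF}}\oplus\langle 0,\epsilon I_{n_x}\rangle$. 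That score is a linear program and is monotone in $\epsilon$, so Steps 1 and 2 go through unchanged.

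A secondary point is that each score takes a maximum over $N_{\mathrm{traj}}$ samples. A new state $x_{k,j}$ drawn from the same dynamics then satisfies the bound whenever the test instance's score does, since one sample's violation is at most the maximum. This inequality only helps the argument.
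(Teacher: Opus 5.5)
Your Steps 1--3 are the paper's proof: apply the split-conformal guarantee to the single path-level scalar $s^{\mathrm{path}}$, then unpack the maximum over $j$ and $d$ to get $x_{j,d}\in[\hat{\ell}_{j,d}-\hat{q}^{\mathrm{path}},\,\hat{u}_{j,d}+\hat{q}^{\mathrm{path}}]$ for all substeps at once. The obstacle you flag is real, and the paper's proof (like that of Theorem~\ref{thm:conformal}) steps over it by passing from interval-hull membership to membership in $\tilde{\mathcal{R}}_{k,j}^{\mathrm{TF}}$; this implication fails in general (take $\hat{\mathcal{R}}_{k,j}^{\mathrm{TF}}=\zono{0,[1\;1]^\top}\subset\R^2$ with every sampled state equal to $(1,-1)$: every score is $0$, so $\hat{q}^{\mathrm{path}}=0$, and the inflated set is the diagonal segment, which contains no sample), so one of your two repairs is needed --- either the box-valued conclusion, which is what the paper's argument actually proves, or the LP-computable $\infty$-norm distance to $\hat{\mathcal{R}}_{k,j}^{\mathrm{TF}}$ as the score.
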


\begin{proof}
The score $s_i^{\mathrm{path}}$ aggregates the maximum violation across all substeps and dimensions within a single coarse interval. By the standard conformal guarantee, $\Pr[s^{\mathrm{path}} \leq \hat{q}^{\mathrm{path}}] \geq 1 - \delta$. The event $s^{\mathrm{path}} \leq \hat{q}^{\mathrm{path}}$ implies that every substep $j$ and every dimension $d$ satisfies $x_{j,d} \in [\hat{\ell}_{j,d} - \hat{q}^{\mathrm{path}},\, \hat{u}_{j,d} + \hat{q}^{\mathrm{path}}]$, so the entire sub-trajectory lies within the inflated sets.
\end{proof}

The pointwise and path-wise guarantees correspond to two operating modes of TA-IRA: the pointwise variant (Theorem~\ref{thm:conformal}) provides per-step coverage with tighter inflation, while the path-wise variant (Corollary~\ref{cor:pathwise}) guarantees joint coverage over all substeps within a coarse interval at the cost of a slightly larger quantile. Both are distinct from the deterministic set-containment guarantee of Theorem~\ref{thm:main}; when deterministic containment is required, one may revert to exact IRA.

The validity of the conformal transfer from fine-step calibration to coarse-anchor inference is verified empirically in Section~\ref{sec:numerical-simulations}.

\begin{algorithm}[t]
\caption{Interpolated Reachability Analysis (IRA / TA-IRA)}
\label{alg:unified}
\textbf{Input:} Initial set $\mathcal{X}_0$, input set $\mathcal{U}$, coarse noise $\mathcal{Z}_w^c$, fine noise $\mathcal{Z}_w^f$, coarse model set $\hat{\mathcal{M}}_c^\Sigma$, fine model set $\hat{\mathcal{M}}_f^\Sigma$, number of coarse steps $K$, substeps per interval $N_s$; \emph{optional:} trained predictor $f_\theta$, conformal quantile $\hat{q}$ \\
\textbf{Output:} Over-approximated reachable sets at all fine time points

\noindent\textbf{Phase~1: Compute coarse anchors (sequential)}
\begin{algorithmic}[1]
\State $\hat{\mathcal{R}}_0^{\mathrm{dd}} \gets \mathcal{X}_0$
\For{$k = 0$ to $K-1$}
  \State $\hat{\mathcal{R}}_{k+1}^{\mathrm{dd}} \gets \hat{\mathcal{M}}_c^\Sigma (\hat{\mathcal{R}}_k^{\mathrm{dd}} \times \mathcal{U}^c) \oplus \mathcal{Z}_w^c$
\EndFor
\end{algorithmic}
\noindent\textbf{Phase~2: Interpolation from anchors}
\begin{algorithmic}[1]
\For{$k = 0$ to $K-1$} \textbf{in parallel}
  \State $\hat{\mathcal{R}}_{k,0} \gets \hat{\mathcal{R}}_k^{\mathrm{dd}}$
  \For{$j = 1$ to $N_s - 1$}
    \If{$f_\theta$ available}
      \State $\hat{\mathcal{R}}_{k,j}^{\mathrm{TF}} \gets \mathrm{tok}^{-1}(f_\theta([\mathrm{tok}(\hat{\mathcal{R}}_{k,j-1}),\; \mathrm{tok}(\hat{\mathcal{R}}_{k+1}^{\mathrm{dd}})]))$
      \State $\hat{\mathcal{R}}_{k,j} \gets \hat{\mathcal{R}}_{k,j}^{\mathrm{TF}} \oplus \langle 0, \hat{q} \cdot I_{n_x} \rangle$
    \Else
      \State $\hat{\mathcal{R}}_{k,j} \gets \hat{\mathcal{M}}_f^\Sigma (\hat{\mathcal{R}}_{k,j-1} \times \mathcal{U}^f) \oplus \mathcal{Z}_w^f$
    \EndIf
  \EndFor
\EndFor
\end{algorithmic}
\textbf{Return} $\{\hat{\mathcal{R}}_k^{\mathrm{dd}}\}_{k=0}^K,\; \{\hat{\mathcal{R}}_{k,j}\}_{k,j}$
\end{algorithm}

Algorithm~\ref{alg:unified} supports two modes: IRA provides deterministic set containment (Theorem~\ref{thm:main}); TA-IRA replaces Phase~2 with Transformer inference and yields pointwise coverage (Theorem~\ref{thm:conformal}) at lower cost. Without a trained model, the algorithm falls back to IRA.

\subsection{Computational Advantage}\label{sec:comp_advantage}

The exact Phase~2 requires $K(N_s - 1)$ MZ-zonotope multiplications, each with cost growing quadratically in generator count. TA-IRA replaces each multiplication with a single forward pass through $f_\theta$ at cost $\mathcal{O}(L(2L_z)^2 d_{\mathrm{model}})$, which is constant since $L_z = \kappa + 1$ is fixed by tokenization.

\section{Numerical simulations}\label{sec:numerical-simulations}

\begin{figure*}[t]
    \centering
    \begin{subfigure}[t]{0.32\textwidth}
        \centering
        \includegraphics[width=\textwidth]{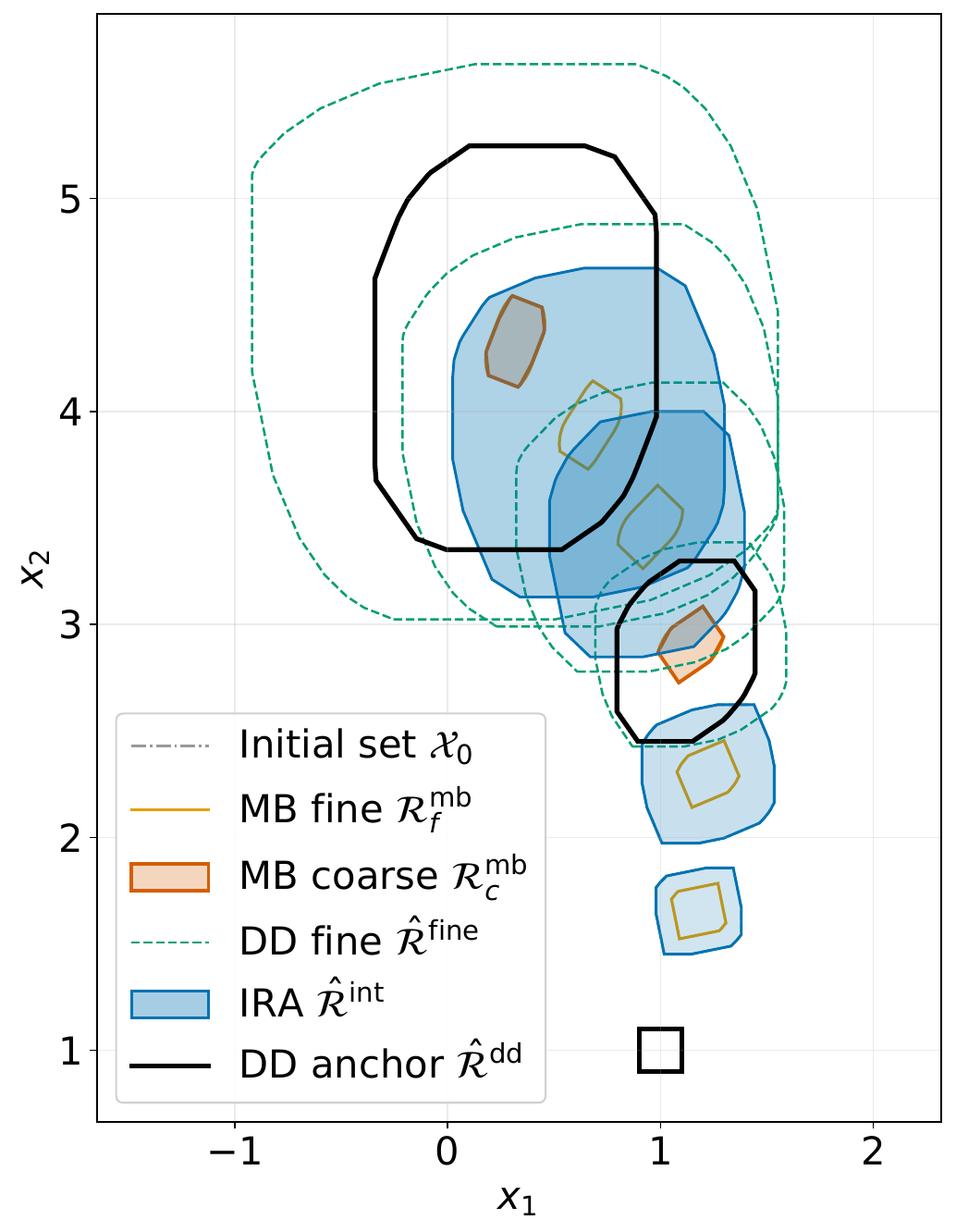}
        \caption{}
        \label{fig:ira_x1x2}
    \end{subfigure}
    \hfill
    \begin{subfigure}[t]{0.32\textwidth}
        \centering
        \includegraphics[width=\textwidth]{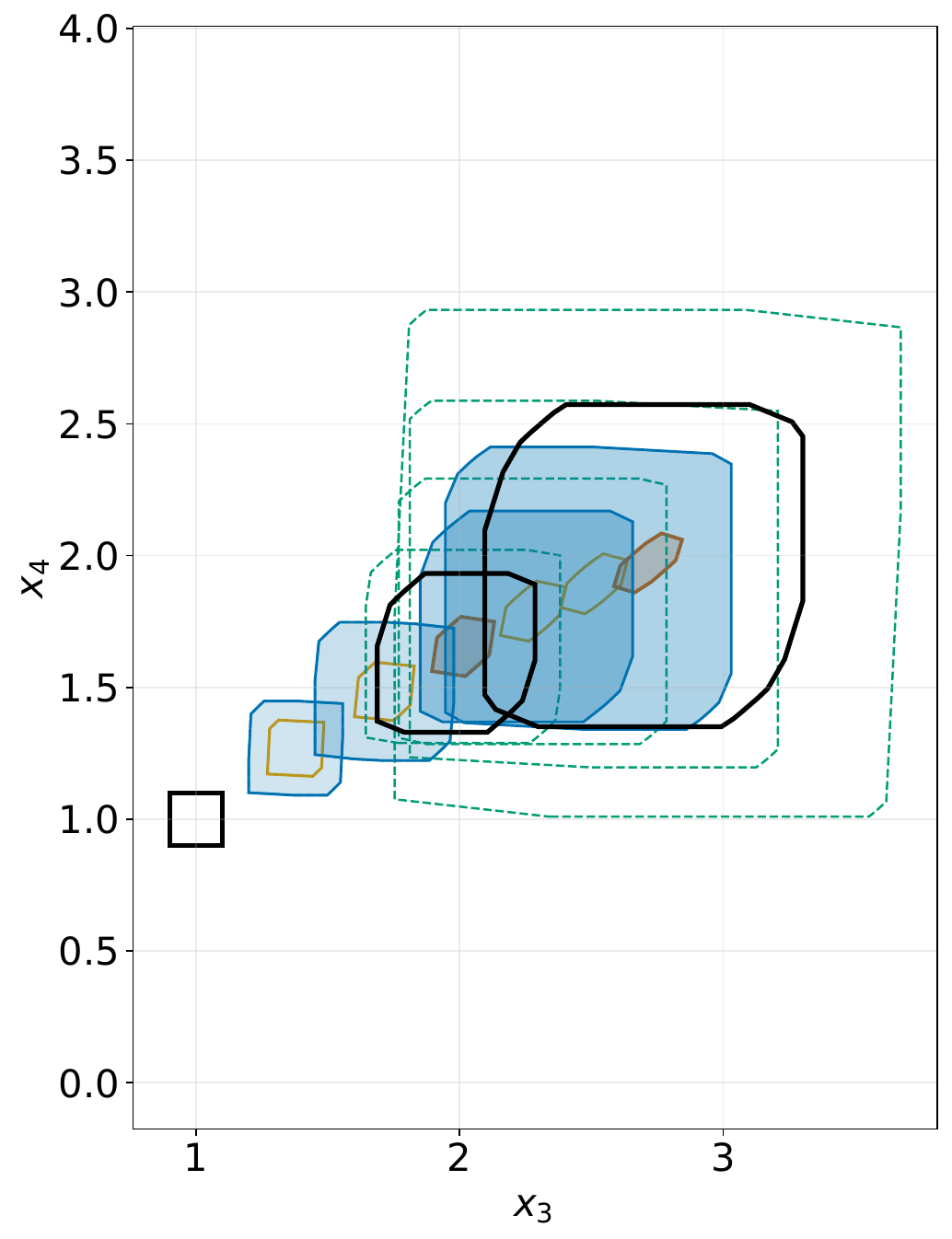}
        \caption{}
        \label{fig:ira_x3x4}
    \end{subfigure}
    \hfill
    \begin{subfigure}[t]{0.32\textwidth}
        \centering
        \includegraphics[width=\textwidth]{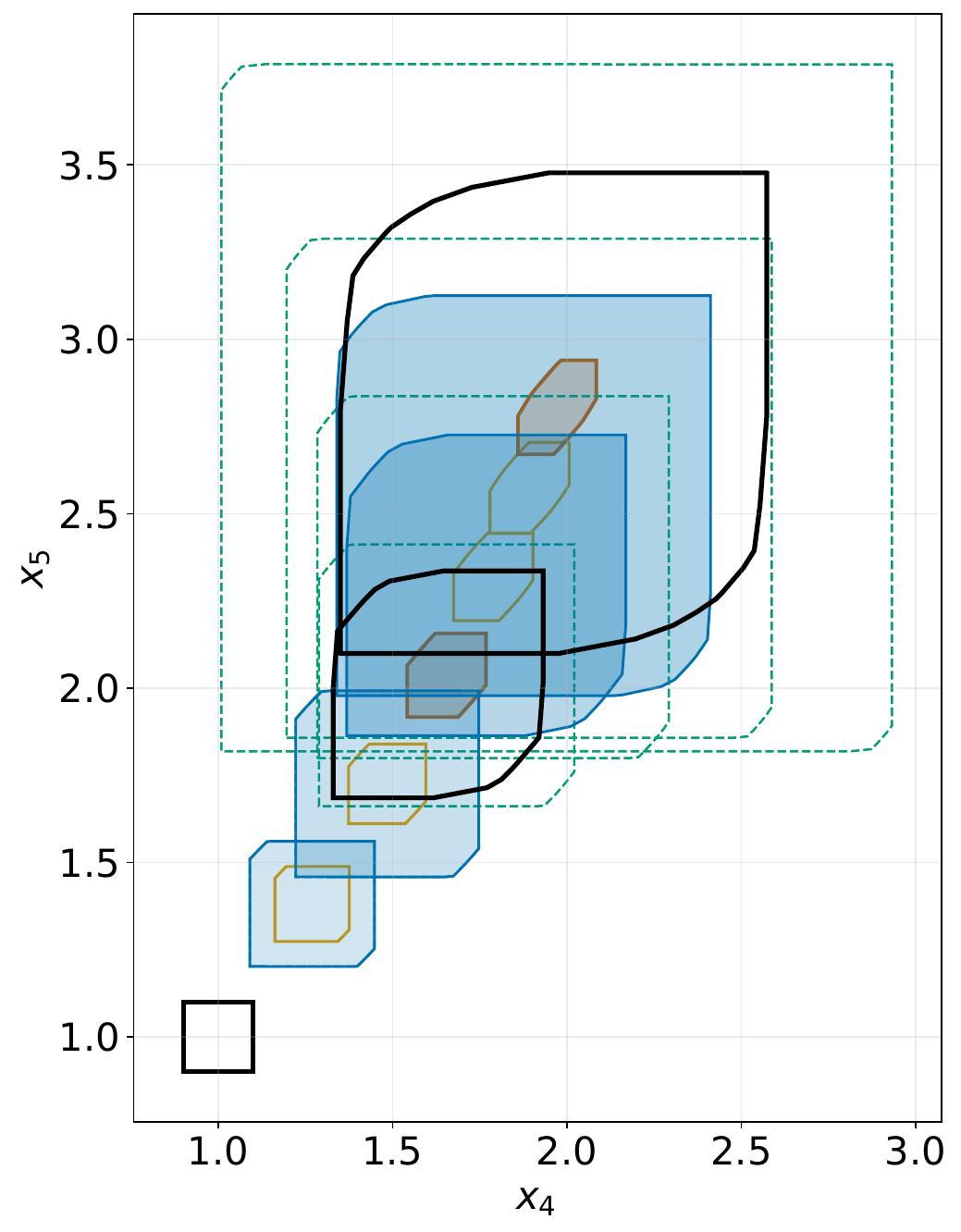}
        \caption{}
        \label{fig:ira_x4x5}
    \end{subfigure}
    \caption{IRA interpolation with $K=2$, $N_s=3$ for three state-space projections.}
    \label{fig:mb_comparison}
\end{figure*}

\begin{figure*}[t]
    \centering
    \begin{subfigure}[t]{0.32\textwidth}
        \centering
        \includegraphics[width=\textwidth]{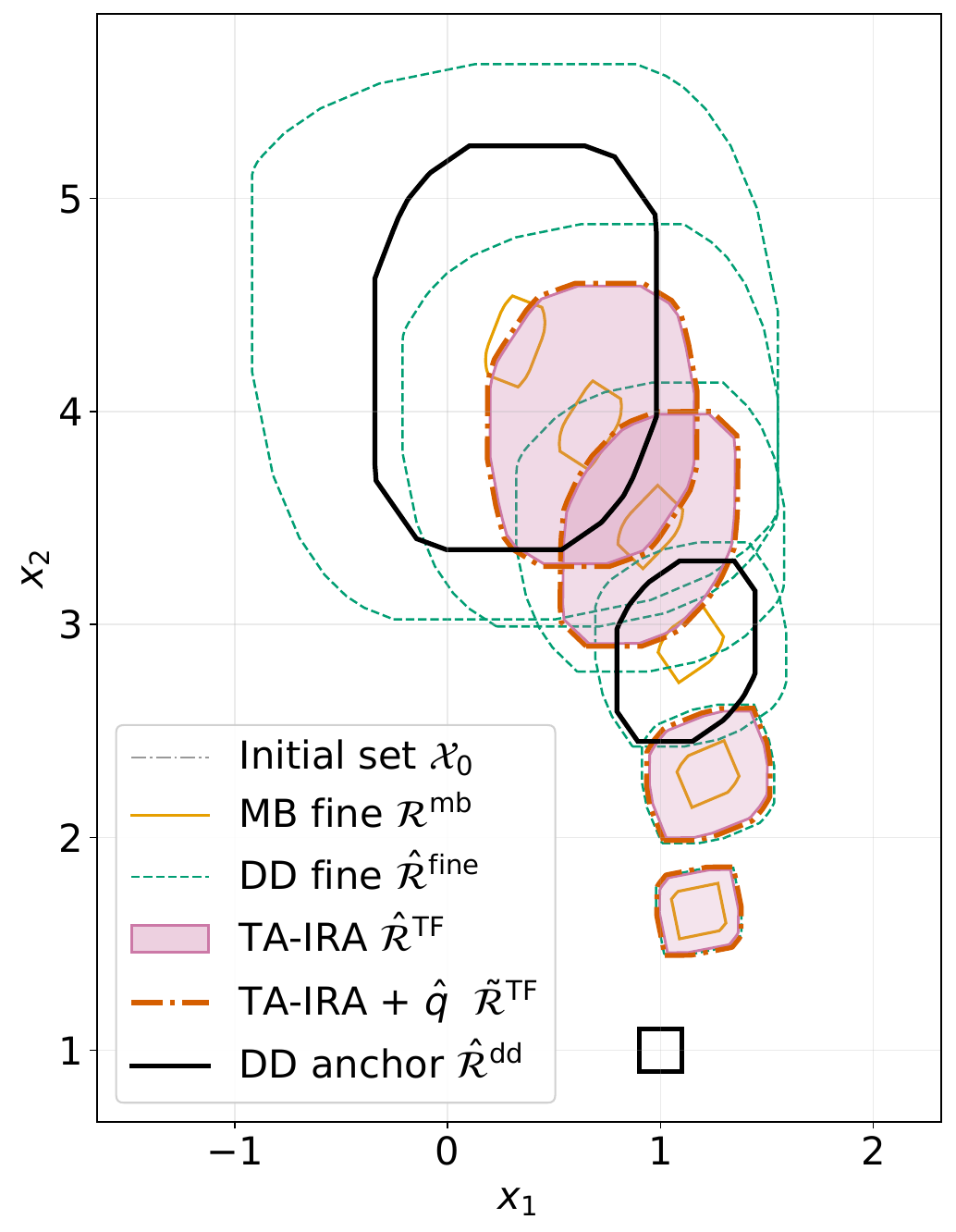}
        \caption{}
        \label{fig:ta_ira_x1x2}
    \end{subfigure}
    \hfill
    \begin{subfigure}[t]{0.32\textwidth}
        \centering
        \includegraphics[width=\textwidth]{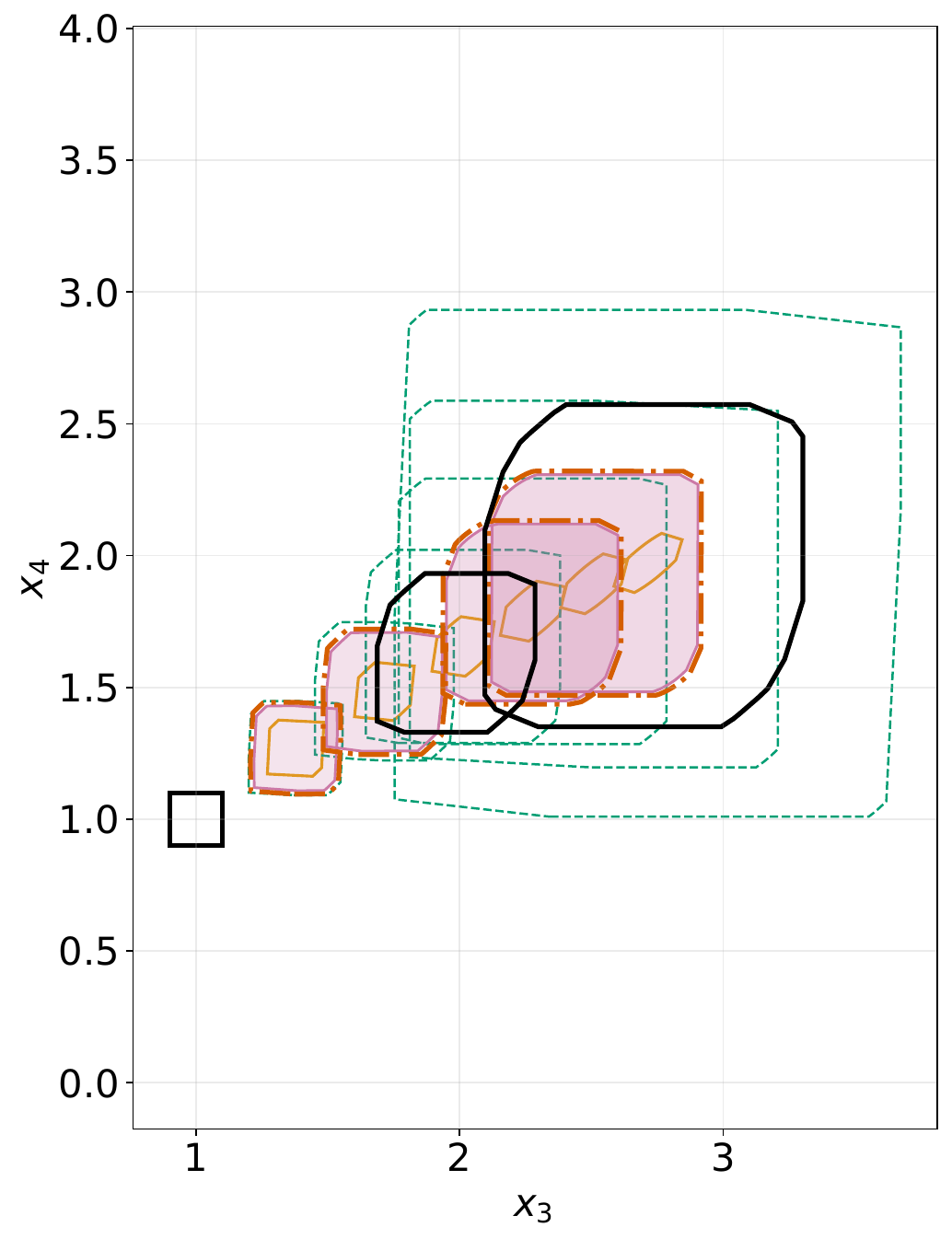}
        \caption{}
        \label{fig:ta_ira_x3x4}
    \end{subfigure}
    \hfill
    \begin{subfigure}[t]{0.32\textwidth}
        \centering
        \includegraphics[width=\textwidth]{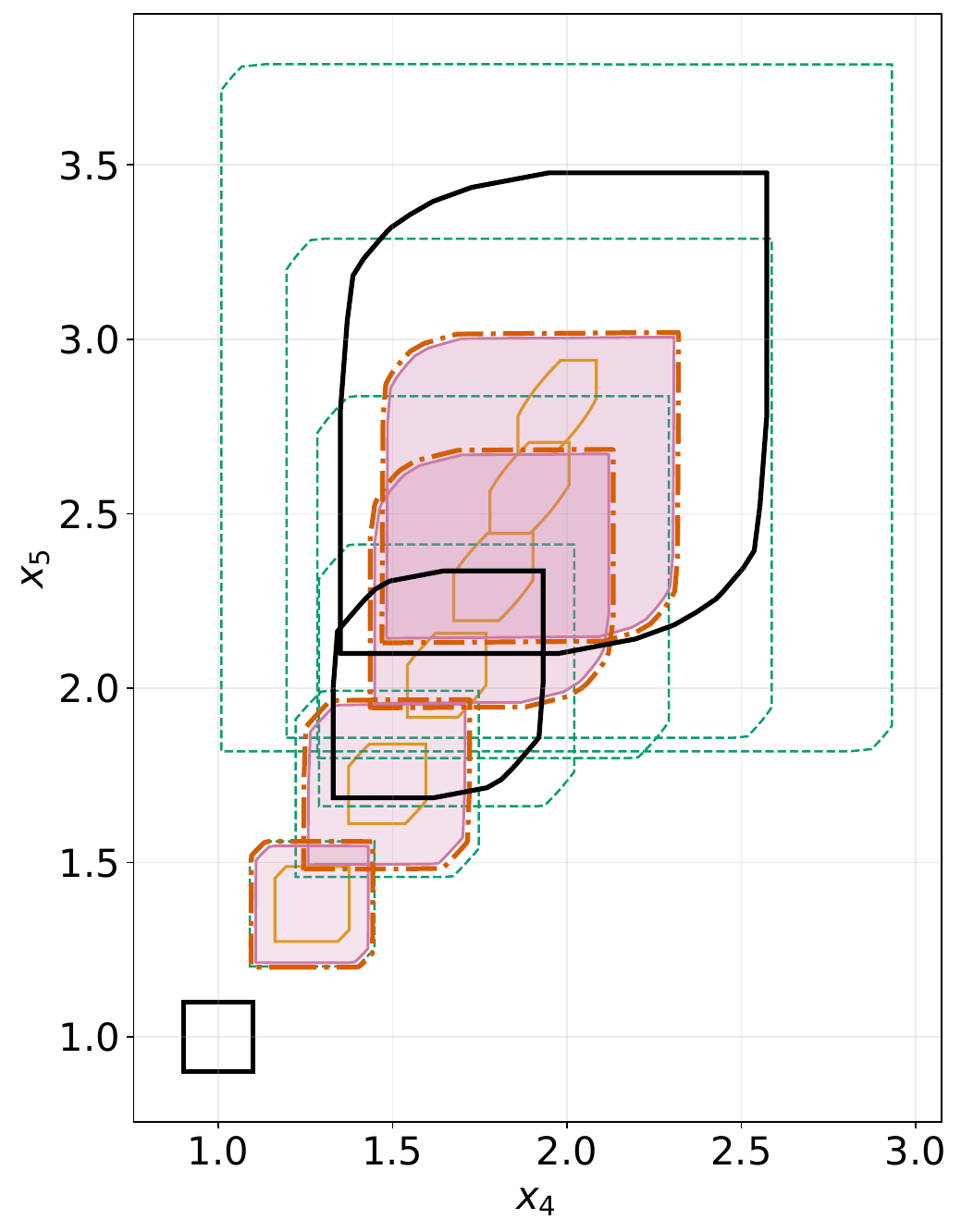}
        \caption{}
        \label{fig:ta_ira_x4x5}
    \end{subfigure}
    \caption{TA-IRA interpolation with Transformer predictions and conformal calibration for the same projections as Fig.~\ref{fig:mb_comparison}.}
    \label{fig:ta-ira-conv}
\end{figure*}

We evaluate the proposed IRA and TA-IRA framework on a five-dimensional linear system discretized from the continuous-time model used in~\cite{alanwar2023data}, with 
$A_c = \mathrm{blkdiag}\!\left(\left[\begin{matrix} -1 & -4 \\ 4 & -1 \end{matrix}\right], \left[\begin{matrix} -3 & 1 \\ -1 & -3 \end{matrix}\right], -2\right)$ and $B_c = 1_{5 \times 1}$.
The initial zonotope is $\mathcal{X}_0 = \zono{1_{5\times 1},\; 0.1 I_5}$, the input set is $\mathcal{U} = \zono{10,\; 0.25}$, and the noise intensity is $\sigma_w = 0.005$. The system is discretized with $\Delta_f = 0.05$\,s and $\Delta_c = 0.15$\,s, yielding $N_s = 3$ substeps per coarse interval. We collect $T = 150$ data samples from a single trajectory, and reduce zonotopes to order $4$ ($\kappa = 20$ generators) after each propagation step.

The prediction horizon spans $K = 2$ coarse steps, corresponding to $N_s K = 6$ fine steps over $T_\mathrm{total} = 0.30$\,s. Fig.~\ref{fig:mb_comparison} overlays the model-based chain and IRA results. The model-based sets at the coarse step size $\Delta_c = 0.15$\,s coincide with the fine propagation at every shared physical time, confirming step-size invariance with $\mathcal{U}^c = \mathcal{U}^f = \mathcal{U}$. The data-driven coarse and fine sets diverge visibly, as predicted by Proposition~\ref{prop:dd_depend}, which motivates the IRA construction.

The Transformer uses $d_\mathrm{model} = 256$, $8$ heads, $4$ encoder and decoder layers, FFN width $1024$, and is trained on $50{,}000$ samples for $1{,}000$ epochs with learning rate $3 \times 10^{-4}$. Split conformal calibration on a $15\%$ held-out set with $1 - \delta = 0.95$ yields $\hat{q} = 0.013$. Fig.~\ref{fig:ta-ira-conv} shows that the Transformer predictions $\hat{\mathcal{R}}^{\mathrm{TF}}$ closely track the fine DD chain, and the conformally inflated sets $\tilde{\mathcal{R}}^{\mathrm{TF}}$ visibly enclose the raw predictions.

\begin{table}[t]
\centering
\caption{Per-step wall-clock cost (ms) for $K{=}2$, $N_s{=}3$. Steps marked $\dagger$ run in parallel.}
\label{tab:timing}
\setlength{\tabcolsep}{4pt}
\begin{tabular}{l|cccccc|cc}
\hline
 & \multicolumn{6}{c|}{Fine-step index} & & \\
Method & 1 & 2 & 3 & 4 & 5 & 6 & Total & Speedup \\
\hline
DD & 1.20 & 213.8 & 414.3 & 409.0 & 408.3 & 408.0 & 1854.7 & $1.0\times$ \\
IRA & 309$^{\dagger}$ & 309$^{\dagger}$ & 1.2 & 309$^{\dagger}$ & 309$^{\dagger}$ & 206.9 & 826.4 & $2.2\times$ \\
TA-IRA & 1.9$^{\dagger}$ & 1.9$^{\dagger}$ & 1.2 & 1.9$^{\dagger}$ & 1.9$^{\dagger}$ & 206.9 & 211.8 & $8.8\times$ \\
\hline
\end{tabular}
\end{table}

Table~\ref{tab:timing} details the per-step wall-clock cost at the benchmark configuration $K = 2$, $N_s = 3$. The fine DD chain requires $1914$\,ms in total; IRA with parallel depth $K + N_s - 1 = 3$ reduces this to $843$\,ms, a $2.3\times$ speedup. TA-IRA further reduces each interpolation step to a fixed-cost forward pass of $3.4$\,ms, yielding a total of $221$\,ms and an $8.7\times$ speedup. The Transformer inference cost is independent of zonotope complexity because the tokenized representation fixes the sequence length at $\kappa + 1$ tokens.

\subsection{Runtime--Conservatism Trade-off}

Table~\ref{tab:tradeoff} sweeps $(K, N_s)$ and reports wall-clock time, speedup, three width ratios, and the Hausdorff distance between IRA and fine DD sets. Speedup grows with both $K$ and $N_s$ as predicted by the parallel depth model. The IRA/DD ratio stays below $1.0$ throughout, and the DD/MB and IRA/MB ratios, where MB denotes the model-based chain, exceed $1.0$ in all configurations, confirming that both data-driven methods remain valid over-approximations. These results indicate that IRA accelerates computation while offering the potential to reduce conservatism relative to the fine DD chain through the anchor-reset mechanism.

\begin{table}[t]
\centering
\caption{Runtime--conservatism trade-off of IRA for varying $K$ and $N_s$. Width ratios are mean IH width of the numerator divided by the denominator.}
\label{tab:tradeoff}
\setlength{\tabcolsep}{2.5pt}
\footnotesize
\begin{tabular}{ccrrrcccc}
\toprule
  & & \multicolumn{2}{c}{Time (ms)} & & \multicolumn{3}{c}{Width ratio} & \\
\cmidrule(lr){3-4} \cmidrule(lr){6-8}
$K$ & $N_s$ & DD & IRA & Speedup & $\frac{\text{IRA}}{\text{DD}}$ & $\frac{\text{DD}}{\text{MB}}$ & $\frac{\text{IRA}}{\text{MB}}$ & Hausdorff \\
\midrule
2 & \multirow{4}{*}{2} &  1149 &   529 & $2.2\times$ & 0.94 & 2.7 & 2.6 &  0.11 \\
3 &                    &  1914 &   742 & $2.6\times$ & 0.91 & 4.0 & 3.5 &  0.34 \\
4 &                    &  2679 &   955 & $2.8\times$ & 0.87 & 6.2 & 5.0 &  0.83 \\
5 &                    &  3444 &  1169 & $3.0\times$ & 0.84 & 9.2 & 7.0 &  1.73 \\
\midrule
2 & \multirow{4}{*}{3} &  1914 &   843 & $2.3\times$ & 0.84 & 4.0 & 3.1 &  0.58 \\
3 &                    &  3062 &  1056 & $2.9\times$ & 0.75 & 7.5 & 4.8 &  1.96 \\
4 &                    &  4209 &  1270 & $3.3\times$ & 0.68 & 13.9 & 7.2 &  5.46 \\
5 &                    &  5357 &  1483 & $3.6\times$ & 0.62 & 26.3 & 11.2 & 14.83 \\
\midrule
2 & \multirow{4}{*}{4} &  2679 &  1157 & $2.3\times$ & 0.75 & 6.2 & 3.8 &  1.71 \\
3 &                    &  4209 &  1371 & $3.1\times$ & 0.62 & 13.9 & 5.9 &  6.63 \\
4 &                    &  5740 &  1584 & $3.6\times$ & 0.52 & 32.6 & 9.4 & 23.90 \\
5 &                    &  7270 &  1797 & $4.0\times$ & 0.45 & 79.5 & 15.0 & 81.63 \\
\bottomrule
\end{tabular}
\end{table}

Table~\ref{tab:ira_speedup} extends the analysis to the full $(K, N_s) \in \{2,\ldots,6\}^2$ parameter space. The speedup grows monotonically with both $K$ and $N_s$, consistent with the theoretical prediction $S = KN_s / (K + N_s - 1)$. IRA reaches a maximum of $5.1\times$ at $K = N_s = 6$, while TA-IRA achieves $18.2\times$ at $K = 2$, $N_s = 6$, representing a further $7.6\times$ improvement over IRA at the same operating point.

\begin{table}[t]
\centering
\caption{Speedup of IRA and TA-IRA over DD for varying $K$ and $N_s$.}
\label{tab:ira_speedup}
\setlength{\tabcolsep}{5pt}
\renewcommand{\arraystretch}{1.05}
\footnotesize
\begin{tabular}{cc|ccccc}
\toprule
$K$ & & $N_s{=}2$ & $3$ & $4$ & $5$ & $6$ \\
\midrule
\multirow{2}{*}{2} & IRA    & $2.2\times$ & $2.3\times$ & $2.3\times$ & $2.3\times$ & $2.4\times$ \\
                   & TA-IRA & $5.3\times$  & $8.7\times$  & $11.9\times$ & $15.1\times$ & $18.2\times$ \\
\midrule
\multirow{2}{*}{3} & IRA    & $2.6\times$ & $2.9\times$ & $3.1\times$ & $3.2\times$ & $3.3\times$ \\
                   & TA-IRA & $4.4\times$  & $7.0\times$  & $9.6\times$  & $12.1\times$ & $14.6\times$ \\
\midrule
\multirow{2}{*}{4} & IRA    & $2.8\times$ & $3.3\times$ & $3.6\times$ & $3.8\times$ & $4.0\times$ \\
                   & TA-IRA & $4.2\times$  & $6.5\times$  & $8.8\times$  & $11.1\times$ & $13.4\times$ \\
\midrule
\multirow{2}{*}{5} & IRA    & $2.9\times$ & $3.6\times$ & $4.0\times$ & $4.3\times$ & $4.6\times$ \\
                   & TA-IRA & $4.0\times$  & $6.2\times$  & $8.4\times$  & $10.6\times$ & $12.7\times$ \\
\midrule
\multirow{2}{*}{6} & IRA    & $3.0\times$ & $3.8\times$ & $4.4\times$ & $4.8\times$ & $5.1\times$ \\
                   & TA-IRA & $3.9\times$  & $6.1\times$  & $8.2\times$  & $10.3\times$ & $12.3\times$ \\
\bottomrule
\end{tabular}
\end{table}

\subsection{Ablation Study}

Table~\ref{tab:ablation} compares four variants at $K=2$, $N_s=3$, all measured as full-horizon wall-clock time including Phase~1. IRA-par reduces wall-clock cost relative to IRA-seq without changing set quality. TA-IRA without conformal calibration produces sets narrower than fine DD, with width ratio below $1.0$, because the Transformer learns a compressed approximation; adding the conformal correction $\hat{q}$ restores pointwise coverage while preserving the speed advantage. Phase~1 anchor computation remains the dominant sequential bottleneck.

\begin{table}[t]
\centering
\caption{Ablation study for $K=2$, $N_s=3$.}
\label{tab:ablation}
\setlength{\tabcolsep}{4pt}
\begin{tabular}{lrrrl}
\hline
Variant & Runtime (ms) & Speedup & Width ratio & Guarantee \\
\hline
IRA-seq              & 1044 & $1.83\times$ & 0.841 & det. \\
IRA-par              & 843 & $2.27\times$ & 0.841 & det. \\
TA-IRA (no $\hat{q}$) & 221 & $8.65\times$ & 0.773 & none \\
TA-IRA + conformal   & 221 & $8.65\times$ & 0.801 & stat. \\
\hline
Fine DD (ref)        & 1914 & $1.00\times$ & 1.000 & det. \\
\hline
\end{tabular}
\end{table}

\section{Conclusion} \label{sec:conclusion}

We presented Interpolated Reachability Analysis, a multi-resolution framework for data-driven reachability with unknown dynamics and bounded noise. IRA exploits the step-size sensitivity of data-driven reachable sets by computing coarse anchors and interpolating fine-resolution sets with formal over-approximation guarantees. A data-driven noise estimation procedure removes the need for continuous-time system knowledge. The Transformer-accelerated variant TA-IRA replaces fine-step propagation with a learned surrogate calibrated via conformal prediction for pointwise and path-wise coverage. The two modes offer complementary safety semantics: IRA provides deterministic set containment, while TA-IRA provides statistical coverage at significantly lower computational cost. Experiments on a five-dimensional system confirm that TA-IRA closely tracks fine-resolution reachable sets while reducing computation. Future work will extend the framework to nonlinear systems and higher-dimensional state spaces.

\bibliographystyle{IEEEtran}
\bibliography{ref}

\end{document}